\documentclass[]{article}
\usepackage{bbm}
\usepackage{bm}
\usepackage[T1]{fontenc}
\usepackage[utf8]{inputenc}
\usepackage{lmodern}
\usepackage{fullpage}
\usepackage{url}
\usepackage{amsmath,amssymb,amsthm}
\newtheorem{theorem}{Theorem}
\newtheorem{corollary}[theorem]{Corollary}
\newtheorem{lemma}[theorem]{Lemma}

\newtheorem{definition}{Definition}

\usepackage{xspace}

\usepackage{cleveref}

\usepackage[linesnumberedhidden,vlined]{algorithm2e}

\newcommand{\Prob}[1]{\mathbf{Pr}\left[#1\right]}

\def\epsilon{\ensuremath{\varepsilon }}
\newcommand{\eps}{\ensuremath{\epsilon }}

\newcommand{\Exp}[1]{\mathbf{E}\left[#1\right]}

\newcommand{\COMMENTED}[1]{{}}
\newcommand{\hide}[1]{\COMMENTED{#1}}

\newcommand{\ratio}{\ensuremath{\xi}}
\newcommand{\ratiod}{\ensuremath{\ratio^{2d}}}

\newcommand{\local}{\textsf{LOCAL}\xspace}

\newcommand{\congest}{\textsf{CONGEST}\xspace}
\newcommand{\bcongest}{\textsf{Broadcast CONGEST}\xspace}

\newcommand{\const}{\ell}
\date{}

\author{Peter Davies-Peck\\Durham University\\\url{peter.w.davies@durham.ac.uk}}

\title{Distributed Approximate Maximum Matching and Minimum Vertex Cover via Generalized Graph Decomposition}

\begin{document}

\maketitle

\begin{abstract}
The classic lower bound of Kuhn, Moscibroda and Wattenhofer [JACM 2016] states that approximate maximum matching and approximate vertex cover (among other problems) in the LOCAL model require $\Omega(\min\{\sqrt{\frac{\log n}{\log\log n}}, \frac{\log \Delta}{\log\log \Delta}\})$ rounds, for any polylogarithmic or smaller approximation ratio. As a function of $\Delta$, this complexity was subsequently matched for constant-approximate weighted vertex cover [Bar-Yehuda, Censor-Hillel and Schwartzman, JACM 2017] and constant-approximate maximum matching [Bar-Yehuda, Censor-Hillel, Ghaffari and Schwartzman, PODC 2017]. One might expect, therefore, that the true complexity should be $\Theta(\frac{\log \Delta}{\log\log \Delta})$, and the $n$-dependent term in the lower bound is just an artefact of the proof method.

We show that this is not the case, and a term dependent on $n$ is in fact required. Specifically, we show randomized algorithms for $2+\eps$-approximate maximum matching and approximate (weighted) minimum vertex cover taking $O(\frac{\log n}{\log^2 \log n})$ rounds. Our algorithms are based on a novel graph decomposition result generalizing the method of Miller, Peng and Xu [SPAA 2013], which we use to reduce the `effective' degree of high-degree graphs. We expect that this decomposition may be of further use for other problems.

\end{abstract}
\section{Introduction}

We study the central problems of approximate maximum matching and approximate vertex cover in the classic \local model of distributed computing. The \local\ model is a graph-based, message passing model, which proceeds in synchronous rounds. The $n$ nodes of the graph are the processors, and to begin with they do not have any knowledge about the input graph (other than, generally, the value of $n$, or at least a polynomial upper bound). If the problem involves input (for example node or edge weights), then nodes only know their own local part of the input (i.e. their own weight and/or the weights of their incident edges). In each round, nodes can perform any arbitrary computation, and send messages of arbitrary size, but only to their neighbors in the input graph - that is, the communication graph is the same as the input graph. The goal is to allow, in as few rounds as possible, all nodes to output their own local part of the required output, so that overall the output is consistent and correct.

The problem of approximate maximum matching is to output a matching (a set of non-adjacent edges) of size within a certain approximation factor of the maximum possible. The problem of approximate minimum vertex cover is to output a set of nodes that are adjacent to all edges, where the set is within the approximation factor of the minimum possible size. In the weighted version, nodes are given non-negative weights as input, and it is the sum of weights of included nodes that should be minimized rather than the number of included nodes.

Maximum matching and minimum vertex cover are highly related problems, with a linear programming formulation in which they are dual programs. The exact problems are NP-complete (minimum vertex cover was one of Karp's 21 NP-complete problems \cite{K72}), and can easily be shown to require $\Omega(n)$ rounds in \local. Distributed algorithms therefore focus on approximations. Of particular interest is the problem of maximal matching, often considered one of the most central problems in the model. A maximal matching is a $2$-approximate maximum matching, and the vertex set given by taking all endpoints of matched edges is a $2$-approximate minimum vertex cover.

\subsection{Related Work}

A classic paper of Israeli and Itai \cite{II86} gave an $O(\log n)$-round algorithm for maximal matching (and therefore also $2$-approximate maximum matching and $2$-approximate minimum vertex cover). This paper pre-dated the formalization of the \local\ model, but it can be easily seen that the algorithm can be implemented in \local. The dedicated study of distributed algorithms for approximate matching and vertex cover began with Lotker, Patt-Shamir and Pettie \cite{LPP15}, who gave an $O(\eps^{-3}\log n)$-round randomized algorithm for $1+\eps$-approximate maximum matching, as well as an $O(\log \eps^{-1}\log n)$-round algorithm for $2+\eps$-approximate maximum \emph{weighted} matching. These results were both improved by Bar-Yehuda, Censor-Hillel, Ghaffari and Schwartzmann \cite{BCGS17} to $O(\frac{\log\Delta}{\log\log\Delta})$ rounds (for constant $\eps$), still randomized. Maximal matching has also seen a long history of study in \local; currently, the fastest known randomized algorithm takes $O(\log \Delta + \log^{O(1)}\log n)$ \cite{BEPS16}, and the fastest deterministic algorithm takes $\tilde O(\log^{5/3} n)$ rounds \cite{GG24}.

For minimum weighted vertex cover, Bar-Yehuda, Censor-Hillel and Schwartzman \cite{BCS17} gave a deterministic \local\ $2+\eps$-approximation algorithm running in $O(\frac{\log\Delta}{\eps\log\log\Delta})$ rounds. Subsequently, Ben-Basat, Even, Kawarabayashi and Schwartzman \cite{BEKS18} improved the dependency on $\eps$ to $O(\frac{\log\Delta \log 1/\eps}{\log\log\Delta})$. There is also a line of work in the low-degree regime, such as the $2$-approximate minimum weighted vertex cover of \r{A}strand and Suomela \cite{AS10} running in $O(\Delta + \log^* W)$ rounds (when weights are from $\{1,2,\dots,W\}$).

On the lower bound side, the classic lower bound of Kuhn, Moscibroda, and Wattenhofer \cite{KMW16} implies that, for the problems of approximate maximum matching and approximate minimum vertex cover, $\Omega(\sqrt{\frac{\log n}{\log\log n}})$ rounds are needed for any $\log^{O(1)} n$ approximation ratio, and $\Omega(\frac{\log \Delta}{\log\log \Delta})$ rounds are needed for any $\log^{O(1)} \Delta$ approximation ratio. To achieve, for example, a constant approximation ratio, therefore, $\Omega(\min\{\sqrt{\frac{\log n}{\log\log n}}, \frac{\log \Delta}{\log\log \Delta}\})$ rounds are required. For maximal matching, Khoury and Schild \cite{KS25} recently improved the lower bound to $\Omega(\min\{\log \Delta, \sqrt{\log n}\})$, but this bound does not apply to approximate maximum matching or approximate minimum vertex cover. 

The reason for the $\sqrt{\frac{\log n}{\log\log n}}$ term in the lower bound of \cite{KMW16} is that it is based on an indistinguishability argument on nodes' neighborhoods, and in order to reason about these neighborhoods, the bound is proven on a high-girth graph so that the neighborhoods are trees. This high-girth requirement effectively imposes an upper bound on the maximum degree for which the $\Omega(\frac{\log \Delta}{\log\log \Delta})$ lower bound works - for example, a $\Delta$-regular graph of girth $g$ contains $\Delta^{\Omega(g)}$ nodes, and therefore must have $\Delta= n^{O(1/g)}$ (the graph used in \cite{KMW16} is not regular, but the same intuition applies). It was unclear whether this high-girth property was really necessary, or whether it might be possible to extend the bound to $\Omega(\frac{\log \Delta}{\log\log \Delta})$ for all $\Delta$ by using low-girth graphs and an (even) more sophisticated argument about the isomorphism between nodes' neighborhoods. Our work shows that there must be a limit to such an approach, and that in high-degree graphs one can leverage structural results to surpass the $\Theta(\frac{\log \Delta}{\log\log \Delta})$ barrier.

\subsection{Our Results}

Our main results are randomized \local algorithms that achieve $2+\eps$-approximations of  maximum (unweighted) matching and minimum weighted vertex cover in $O(\frac{\log n}{\log^2\log n})$ rounds: 

\begin{theorem}\label{thm:MM}
	For any $\eps \in (0,1)$ (which can be a function of $n$) and any constant $\delta>0$, a $2+\eps $-approximate maximum cardinality matching can be computed in $O(\frac{\log n}{\log^2\log n})$ rounds of \local, succeeding with probability at least $1-\frac{1}{\eps} e^{- \Omega(\log^{1-\delta} n)}$.
\end{theorem}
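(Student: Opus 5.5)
We prove Theorem~\ref{thm:MM} by splitting on the maximum degree. If $\Delta \le 2^{\log n/\log\log n}$, then $\frac{\log\Delta}{\log\log\Delta} = O(\frac{\log n}{\log^2\log n})$, and the algorithm of Bar-Yehuda, Censor-Hillel, Ghaffari and Schwartzman \cite{BCGS17} already gives a $2+\eps$-approximation within the time bound. The whole difficulty is therefore high-degree graphs. There we want to lower the ``effective'' degree to $\Delta' = 2^{O(\log n/\log\log n)}$, and then run a low-degree algorithm on a sparsified structure.

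\textbf{Step 1: decomposition.} We generalize Miller--Peng--Xu (MPX). Each node $v$ draws an exponential shift $\delta_v \sim \mathrm{Exp}(\beta)$ with $\beta \approx \log\log n$, or a heavier-tailed variant. Nodes join the cluster of the center maximizing $\delta_u - d(u,v)$. Standard MPX gives clusters of radius $O(\log n/\beta) = O(\log n/\log\log n)$ w.h.p. That alone is too slow, since we need radius $O(\log n/\log^2\log n)$. So the generalization should allow clusters of small radius $O(\log n/\log^2 \log n)$ in which a constant fraction of edges may still be cut. The compensating property is that each node has at most $\Delta' = 2^{O(\log n/\log\log n)}$ ``relevant'' neighboring clusters, or is adjacent to few clusters in total. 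Equivalently, with radius $R$ and rate $\beta = \Theta(\log n / R)$, each node sees cluster-candidates only within a window of width $O(1)$ in shifted distance. The expected number of competing centers is then controlled, and the variance argument of MPX bounds the number of distinct clusters touching a node.

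\textbf{Step 2: contraction and matching.} Contract each cluster into a super-node; a round in the contracted graph is simulated in $O(R)$ rounds. The contracted graph has degree at most $\Delta'$, so \cite{BCGS17} runs on it in $O(\log\Delta'/\log\log\Delta') = O(\log n/\log^2\log n)$ contracted rounds. The product would be too large, so the scheme must be arranged more carefully. Inside each cluster of radius $R$ we can compute a maximal or near-optimal matching by brute force in $O(R)$ rounds, since \local allows gathering the whole cluster. We iterate the decomposition $O(1/\eps)$ times, in the style of Lotker--Patt-Shamir--Pettie and \cite{BCGS17}. In each phase, intra-cluster augmentation raises the matching weight by a constant fraction of the gap to a $2$-approximation, restricted to uncut edges. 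Cut edges are handled by a fractional-matching/local-ratio argument, in which each node's LP load is updated. The failure probability per phase is $e^{-\Omega(\log^{1-\delta}n)}$, from concentration of exponential variables at the chosen truncation. A union bound over the $O(1/\eps)$ phases gives the stated $1-\frac1\eps e^{-\Omega(\log^{1-\delta}n)}$.

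\textbf{Main obstacle.} The hardest step is the decomposition guarantee. It must achieve cluster radius $O(\log n/\log^2\log n)$ while ensuring the cut edges do not destroy the approximation. Concretely, the edges lost at cluster boundaries must be charged to a $(1+\eps)$-factor loss relative to the fractional optimum. My first attempt would be to make the exponential rate depend on node degree, so high-degree nodes are almost surely interior to a cluster. With rate $\beta_v$ chosen so that $\Pr[v\text{ is boundary}] \le \deg(v)^{-1/2}$ or similar, the LP dual charging only loses $\eps$ on boundary nodes. The tail bound with exponent $\log^{1-\delta} n$ would come from truncating these exponentials.
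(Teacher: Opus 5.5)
Your proposal has a genuine gap: it never produces a working algorithm for high-degree graphs, and it becomes vague exactly where the paper's new ideas are needed.

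\textbf{The decomposition.} With exponential shifts of a single rate $\beta$, clusters have radius $R\approx \ln n/\beta$, and a node can see up to $e^{\Theta(\beta)}$ adjacent clusters. Running an $O(\frac{\log\Delta'}{\log\log\Delta'})$-round algorithm at cluster level therefore costs about $R\cdot\frac{\beta}{\log\beta}=\Theta(\frac{\log n}{\log\beta})$. This is $\Omega(\frac{\log n}{\log\log n})$ for every $\beta\le\ln n$. Your own choice $R=\frac{\log n}{\log^2\log n}$, i.e.\ $\beta=\Theta(\log^2\log n)$, gives $\Theta(\frac{\log n}{\log\log\log n})$. A single memoryless scale cannot escape this.

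You mention a ``heavier-tailed variant'' only in passing, but it is the crux. The paper uses $G(x)=(1+x)^{-\alpha}$ with $\alpha=\frac{\ell\log n}{\log\log n}$, whose hazard rate at shift $x$ is only $\alpha/(1+x)$. It analyses \textsc{Partition} without memorylessness, by bounding $\Exp{e^{\gamma(q)C_{q,2}}}\le 2e$ (Theorem~\ref{thm:moment}). This gives, simultaneously for every $q$, at most $e^{3\alpha/(q+1)}$ adjacent clusters of radius at least $q$, with all radii at most $\log^{3/\ell}n$ (Lemma~\ref{lem:localpart}). Clusters of radius $q$ then cost $O\bigl((q+1)\cdot\frac{\alpha/(q+1)}{\log(\alpha/(q+1))}\bigr)=O(\frac{\log n}{\log^2\log n})$, since $q$ is tiny and so $\log(\alpha/(q+1))=\Theta(\log\log n)$. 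All scales run concurrently, each with its own period $2(r_C+1)$. So the right target is not clusters of radius $\frac{\log n}{\log^2\log n}$, but very small clusters of many radii whose effective degree shrinks as the radius grows.

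\textbf{Step~2.} This step does not yield an algorithm, for three reasons:
\begin{itemize}
\item A matching computed on contracted super-nodes is not a matching of $G$.
\item Brute-force matching inside clusters discards the cut edges, which you yourself allow to be a constant fraction.
\item $O(1/\eps)$ sequential phases multiply the round complexity by $1/\eps$. This contradicts the $\eps$-independent bound, since $\eps$ may depend on $n$.
\end{itemize}
The degree-dependent rates and the ``truncation'' concentration you invoke for the failure probability are not backed by any argument.

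The paper sidesteps cut edges entirely. It clusters the \emph{edges} (the line graph) and keeps edge-level self-nomination with probability $w(e)$ every two rounds. Only the fractional weight updates are delegated to clusters. Clusters greedily fill per-node caps $\kappa_{u,C}$, which start at $e^{-4\alpha/(1+r_C)}$, grow by $K^2$ when met (provided the node's total cap is at most $\frac14K^{-3}$) and shrink by $K$ otherwise, with $K=\log^{99/\ell}n$. The product $\kappa_{u,C}\kappa_{v,C}$ then rises by a factor $K$ in every pivotal round unless an endpoint already has large total cap. This forces enough weight to accumulate around every edge that it is removed with probability $1-e^{-\Omega(\log^{1-\delta}n)}$ once $\ell=502/\delta$.

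The factor $\frac1\eps$ in the success probability comes from a single Markov inequality on the number of optimal-matching edges left active, not from a union bound over phases. Your low-degree reduction to \cite{BCGS17} is unnecessary. As cited in the paper, that result is for constant $\eps$ and has a weaker per-edge failure bound, so it would not directly give the claimed guarantee either.
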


\begin{theorem}\label{thm:MWVC}
	For any	$\eps \in (0,1)$ (which can be a function of $n$), a $2+\eps$-approximate minimum weight vertex cover can be computed in $O(\frac{\log n}{\log^2\log n} + \frac{\log n\log 1/\eps}{\log^3\log n})$ rounds of \local, succeeding with high probability in $n$.
\end{theorem}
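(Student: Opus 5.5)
We will prove Theorem~\ref{thm:MWVC} by reducing to the regime where the \cite{BCS17}/\cite{BEKS18} primal-dual algorithm is already fast enough. That algorithm runs in $O(\frac{\log \Delta \log 1/\eps}{\log\log\Delta})$ rounds. If $\Delta \le 2^{\log n/\log\log n}$, it gives $O(\frac{\log n\log 1/\eps}{\log^2\log n})$ rounds, which is within our bound only up to a $\log\log n$ factor. We therefore aim to reduce the \emph{effective} degree to $\Delta' = 2^{O(\log n/\log^2\log n)}$. Then $\frac{\log\Delta'}{\log\log\Delta'} = O(\frac{\log n}{\log^3 \log n})$, which matches the second term.

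\textbf{Decomposition.} The plan is to use the generalized Miller--Peng--Xu decomposition the abstract promises. Each node $v$ draws an exponential shift $\delta_v$ with a \emph{weight-dependent} rate. Each node then joins the cluster of the center maximizing $\delta_u - \mathrm{dist}(u,v)$. With rate $\beta \approx \log\log n$, clusters have (weak) diameter $O(\log n/\log\log n)$ w.h.p. The property we want from the generalization is different from the standard MPX bound on the probability that an edge is cut. We want each node to have few \emph{distinct} neighboring clusters whose shift values are within $1$ of its own maximizer. Concretely, the number of clusters adjacent to $v$ that are ``competitive'' should be at most $2^{O(\log n/\log^2\log n)}$. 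This should follow from the memorylessness of exponential variables: the gap between the top shifted value and the $k$-th one is at least $1$ unless $k$ exceeds roughly $e^{\beta}$, and we tune $\beta$ accordingly.

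\textbf{Algorithm.} Contracting clusters gives a cluster graph. Because cluster diameter is $O(\log n/\log\log n)$, one cluster-graph round costs that many \local rounds, which is too expensive to pay for every round of the primal-dual phase. Instead we run the primal-dual (local-ratio) algorithm of \cite{BCS17} \emph{within each cluster in parallel}, simulated by gathering the whole cluster at its center. This takes $O(\log n/\log\log n)$ rounds per phase. We handle the inter-cluster edges separately, via the degree bound: each node sees only $\Delta'$ relevant foreign clusters. Across clusters we run the \cite{BEKS18} algorithm on the graph whose degrees are bounded by $\Delta'$, so $O(\frac{\log n}{\log^3\log n}\log 1/\eps)$ rounds suffice. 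Nodes that become covered with fractional dual sums reaching $(1-\eps)$ of their weight join the cover. Dual feasibility plus this tightness criterion gives the standard $2+\eps$ guarantee by weak duality.

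To control the overall cost we iterate decomposition and degree-reduction $O(\log\log n)$ times. The top level has a diameter cost of $O(\log n/\log^2\log n)$ per iteration, with the diameter parameter rescaled so the sum telescopes to the first term. The edges left uncovered after a phase form a graph of smaller degree. W.h.p. correctness follows from the tail bounds on the exponential shifts together with a union bound.

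\textbf{Main obstacle.} The delicate step is proving the bound on the number of competitive neighboring clusters while keeping the diameter small enough. Equivalently, the task is showing that edges between clusters can be handled as a low-degree instance without breaking dual feasibility for edges cut by the decomposition. I would first try to assign each cut edge to the lower-ranked cluster's center and charge its dual increases there, bounding the load by the competitive-cluster count.
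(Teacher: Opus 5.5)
Your proposal does not reach the claimed bound, and the obstruction is structural rather than a matter of tuning. With exponential shifts of rate $\beta$, the largest shift (and, in bad graphs, the cluster radius) is $\Theta(\frac{\log n}{\beta})$. The number of adjacent clusters whose shifted value is within $O(1)$ of the winner can be $e^{\Theta(\beta)}$, for example at the centre of a large star. Moreover, the ratio $G(x)/G(x+2)=e^{2\beta}$ does not improve at larger radii. So radius times $\log(\text{effective degree})$ is essentially $\Theta(\log n)$ for every $\beta$. Simulating an $O(\frac{\log\Delta'}{\log\log\Delta'})$-round algorithm while paying the cluster radius per simulated round therefore costs about $\frac{\log n}{\log\log\Delta'}=\Omega(\frac{\log n}{\log\log n})$. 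That is no better than running \cite{BCS17,BEKS18} directly.

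Your own parameters show the problem. With $\beta\approx\log\log n$ the diameter is $O(\frac{\log n}{\log\log n})$, so the intra-cluster gathering step alone exceeds the target $O(\frac{\log n}{\log^2\log n})$ by a $\log\log n$ factor. Your proposed fix, $O(\log\log n)$ iterations at $O(\frac{\log n}{\log^2\log n})$ each with a ``telescoping'' rescaling, is unjustified, and as stated it sums to $O(\frac{\log n}{\log\log n})$ anyway.

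There is a second error. Bounding the number of distinct clusters adjacent to $v$ does not bound $v$'s number of inter-cluster edges, since $v$ may have arbitrarily many neighbours inside one foreign cluster. So the inter-cluster graph does not have degree $\Delta'$, and \cite{BEKS18} cannot simply be run on it. The degree reduction is only real if each cluster acts as a single entity, which costs its radius per interaction. In addition, running an intra-cluster and an inter-cluster local-ratio process on the same vertex weights, with aggregation delays, needs an argument that weights stay non-negative, and you do not give one.

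The idea you are missing is the paper's heavy-tailed shift distribution $G(x)=(1+x)^{-\alpha}$ with $\alpha=\frac{30\log n}{\log\log n}$, analysed without memorylessness (Theorem \ref{thm:moment}). It gives clusters of radius at most $\log^{0.1}n$, and each node has at most $e^{3\alpha/(q+1)}$ adjacent clusters of radius at least $q$ (Lemma \ref{lem:localpart}). Radius times $\log(\text{effective degree})$ is now $O(\alpha)=O(\frac{\log n}{\log\log n})$ instead of $\Theta(\log n)$. Because radii are polylogarithmically small, $\log(\alpha/q)=\Theta(\log\log n)$, so each radius class costs $O(q\cdot\frac{\alpha/q}{\log(\alpha/q)})=O(\frac{\log n}{\log^2\log n})$.

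To exploit this, the paper proceeds as follows:
\begin{itemize}
\item Each cluster runs asynchronously at its own period $2(r_C+1)$.
\item An active node $v$ sends each adjacent radius-$r$ cluster a request $\frac{w_i(v)}{d_r(v)\log^{0.3}n}$.
\item Clusters answer greedily from their members' weights, holding back $\log^{-0.1}n\cdot w_i(\cdot)$ to cover requests still in flight; this keeps weights non-negative (Lemma \ref{lem:posweights}).
\item Each answer either removes $\frac{w_i(v)}{\log^{0.7}n}$ from $w(v)$ or shrinks $d_r(v)$ by a $\log^{0.4}n$ factor (Lemmas \ref{lem:progressP1a}--\ref{lem:progressP2b}).
\end{itemize}
The $\log 1/\eps$ term comes from $1+O(\frac{\log 1/\eps}{\log\log n})$ sequential stages, each reducing surviving weights by a $\log^{0.1}n$ factor. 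It does not come from plugging a reduced degree into \cite{BEKS18}.
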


While the round complexity may seem a modest improvement over the previous best results (a $\log\log n$-factor improvement over \cite{BCGS17,BCS17,BEKS18}, and only for large $\Delta$), these results are significant since they demonstrate that an $n$-dependent term in the complexity is necessary, and the $O(\frac{\log \Delta}{\log\log\Delta})$ bound is not tight for all $\Delta$.

These algorithms are based on a novel method of graph decomposition that generalizes the well-known procedure of Miller, Peng and Xu \cite{MPX13}. This is a randomized procedure that uses exponentially-distributed shift variables to partition the graph into clusters, each with their own cluster center, that have desirable properties such as low diameter/radius\footnote{Radius here means the maximum distance from a cluster's center to any of its members, meaning that the diameter is at most twice the radius.}.

Here, we show a new analysis of the procedure that allows the use of any distribution for the shift variables, rather than requiring the memorylessness property of the exponential distribution. We then use a particular distribution (with a \emph{polynomially}, rather than exponentially, decreasing tail) to yield a decomposition with the following properties:

\newcommand{\diff}{\ensuremath{\bm{d}}}
\newcommand{\z}{\ensuremath{\mathcal Z}}
\newcommand{\y}{\ensuremath{\mathcal Y}}
\newcommand{\mult}{\ensuremath{\pi}}
\newcommand{\add}{\ensuremath{\tau}}

\begin{lemma}\label{lem:localpart}
	For any constant $\const \ge 4$, and for $\alpha = \frac{\const\log n}{\log\log n}$, there is a randomized \local (and \congest) algorithm, running in $\log^{3/\const} n$ rounds, that produces a clustering with the following properties, with high probability:
	
	\begin{itemize}
		\item All clusters $C$ have radius $r_C \le \log^{3/\const} n$, and so diameter at most $2\log^{3/\const} n$. 
		\item For any $q\ge 0$, each node is adjacent to at most $e^{\frac{3\alpha}{q+1}}$ clusters $C$ with radius $r_C\ge q$.
	\end{itemize}
\end{lemma}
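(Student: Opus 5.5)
We run the Miller--Peng--Xu shifted-BFS procedure with heavy-tailed integer shifts instead of exponential ones. Each node $v$ independently draws a shift $\delta_v \in \{0,1,\dots,R\}$ with $R=\lfloor \log^{3/\const} n\rfloor$. The distribution is polynomially decaying, $\Prob{\delta_v \ge t} = (t+1)^{-\alpha'}$ for $t\le R$ (with the remaining mass placed on $0$), where $\alpha'$ is chosen slightly below $\alpha$. Each node $u$ then joins the cluster of the center $v$ maximizing $\delta_v - \mathrm{dist}(u,v)$, breaking ties by ID. As usual this is implemented by having $v$ start a BFS wave at time $R-\delta_v$, and it takes $R$ rounds. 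Messages consist only of IDs and shift values, so the algorithm also runs in \congest.

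\textbf{Radius and round bound.} A node $u$ can only prefer $v$ over itself if $\delta_v-\mathrm{dist}(u,v)\ge \delta_u\ge 0$, so $\mathrm{dist}(u,v)\le \delta_v\le R$. Hence every cluster centered at $v$ has radius $r_C\le \delta_v\le R=\log^{3/\const} n$. This also shows the wave process finishes within $R$ rounds. I first need to check that clusters are connected in the usual way, so that the radius is measured inside the cluster. This holds because if $u$ joins $v$, then every node on a shortest $u$--$v$ path also joins $v$; the argument is the standard one, with ties resolved consistently by ID.

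\textbf{Adjacency bound; this is the main step.} Fix $u$ and $q$. A cluster $C$ centered at $v$ with $r_C\ge q$ forces $\delta_v\ge q$. If $u$ is adjacent to it, then $\delta_v-\mathrm{dist}(u,v)\ge m_u-1$, where $m_u=\max_w(\delta_w-\mathrm{dist}(u,w))$. So it suffices to bound the number $N$ of centers $v$ with $\delta_v\ge q$ and $\delta_v-\mathrm{dist}(u,v)\ge m_u-1$. This is where the new analysis, valid for any shift distribution, is needed.

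I would expose shifts in decreasing order of the value $X_v=\delta_v-\mathrm{dist}(u,v)$ and use a ratio argument instead of memorylessness. Given that the top value is $m$, the probability that another candidate lands in $\{m-1,m\}$ is controlled by the ratio $\Prob{\delta\ge t-1}/\Prob{\delta\ge t}=(1+1/t)^{\alpha'}$. When the relevant $t$ is at least $q$, this ratio is at most $e^{\alpha'/(t)}\le e^{\alpha'/q}$.

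I would then aim for a geometric tail of the form $\Prob{N\ge k}\le (1-e^{-\alpha'/(q+1)})^{k-1}$, or more crudely a bound via the expectation. This would give $N\le e^{\alpha'/(q+1)}\cdot O(\log n)$ with high probability. Since $\log n = e^{\alpha\log\log n/(\const\log n)\cdot\ldots}$, I must instead absorb the $O(\log n)$ factor into the exponent. Here the choice $\alpha=\const\log n/\log\log n$ and the range $q+1\le R+1$ should make $e^{2\alpha/(q+1)}\ge \log^{\Omega(1)}n$, so the extra $\log n$ fits within $e^{3\alpha/(q+1)}$.

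The case $q>R$ is vacuous. The case $q=0$ is bounded trivially by degree considerations against $e^{3\alpha}=n^{3\const/\log\log n}$, which exceeds $n$.

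\textbf{Concentration and union bound.} A union bound over all $u$ and all $q\le R$ finishes the proof. I expect the hardest part to be making the ratio-based tail bound rigorous without memorylessness. The truncation at $R$ and the integer ties have to be handled while conditioning on the maximum. I would try first to follow the MPX proof step by step, replacing $e^{-\beta}$ by the hazard ratio bound above.
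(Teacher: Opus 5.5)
\textbf{There is a genuine gap: the second bullet, which carries the lemma, is only sketched, and the sketch does not work as stated.} Your skeleton does match the paper's. You use MPX clustering with polynomial-tail shifts $G(x)=(1+x)^{-\alpha}$ and bound each radius by the center's shift. You charge adjacent clusters of radius $\ge q$ to centers $w$ with $\delta_w\ge q$ whose value $\delta_w-d_w$ is near the maximum. You use the ratio $\rho=\sup_{x\ge q}G(x)/G(x+2)\le e^{2\alpha/(q+1)}$ instead of memorylessness, and absorb the $O(\log n)$ loss. The problem is in how you bound the number $N$ of such centers.

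\textbf{The tail bound on $N$.} Conditioned on the maximum being $m$, a candidate $w$ lands in the window with probability $\frac{G(\max\{q,m-2+d_w\})-G(m+d_w)}{F(m+d_w)}$. This is of order $(\rho-1)G(m+d_w)$: the ratio \emph{times} $G(m+d_w)$, not the ratio alone. The sum $\sum_w G(m+d_w)$ is not bounded uniformly in $m$; it is huge when $u$ has huge degree and $m$ is small. You must average over the law of the maximum, whose density carries a factor $e^{-\sum_w G(m+d_w)}$ that cancels this growth. That is exactly what Theorem~\ref{thm:moment} does:
\begin{itemize}
\item condition on the center $u$ and on $p=\delta_u-d_u$, and bound $\Prob{u\text{ is the center}\mid p}\le e^{1-\sum_w G(p+d_w)}$;
\item choose $\gamma(q)$ so that the conditional moment generating function of $C_{q,2}$ is at most $e^{\frac12\sum_w G(p+d_w)}$;
\item integrate via the substitution $H(p)=-\sum_w G(p+d_w)$ to get $\Exp{e^{\gamma(q)C_{q,2}}}\le 2e$, then apply Markov.
\end{itemize}
Your fallback ``via the expectation'' cannot work. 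Markov on $\Exp{N}$ gives only $N\le n^3\Exp{N}$ w.h.p., while for $q$ near $\log^{3/\const}n$ the target $e^{3\alpha/(q+1)}$ is $n^{o(1)}$.

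\textbf{How the geometric tail can be rescued.} The geometric tail you aim for is in fact true, but it needs MPX-style conditioning on an order statistic rather than on the maximum. Restrict to eligible nodes ($\delta_w\ge q$); then $N\ge k$ forces the $k$ largest eligible values of $\delta_w-d_w$ to lie within $2$ of each other. Condition on the identity and value $Y$ of the $k$-th largest of these. The $k-1$ eligible nodes above it are then independent, each conditioned on $\delta_w>x:=\max\{q,Y+d_w\}$. Each stays within $2$ of $Y$ with probability at most $1-G(x+2)/G(x)\le 1-1/\rho$, so $\Prob{N\ge k}\le(1-e^{-2\alpha/(q+1)})^{k-1}$. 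Written out with the continuous untruncated law, this would be a valid and somewhat more elementary alternative to the paper's moment bound. With your truncation at $R$, $G(x)/G(x+2)$ is infinite for $x$ near $R$, so you would also need a coupling argument.

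\textbf{Window width.} Adjacency only gives $\delta_v-\mathrm{dist}(u,v)\ge m_u-2$, not $m_u-1$. One unit is lost moving from $u$ to its neighbor $u'\in C$ in $v$'s value. Another is lost in the value of $u$'s own center as seen from $u'$. For example, take a single edge $uv$ with $\delta_u=5$, $\delta_v=4$, and the tie at $v$ broken toward $v$. Both nodes are centers and $u$ is adjacent to $v$'s cluster, yet $\delta_v-1=m_u-2$. So your $N$ does not dominate the number of adjacent clusters. The window must have width $2$, matching the paper's condition $\delta_{CC(v)}-d_{CC(v)}-2\le\delta_w-d_w$; this is why the relevant ratio is $G(x)/G(x+2)$. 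It still fits within the $e^{3\alpha/(q+1)}$ budget.

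\textbf{The case $q=0$.} This case is not trivial: $e^{3\alpha}=n^{O(1/\log\log n)}<n$ for large $n$, so degree considerations do not suffice. It is nevertheless covered by the general bound, since $\sup_{x\ge 0}G(x)/G(x+2)\le 3^{\alpha}$ is finite.
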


Most prior applications of the clustering of \cite{MPX13} use logarithmic or higher diameter; here we focus on a clustering of sublogarithmic diameter and show that this can still provide useful properties for distributed graph algorithms. The intuition behind the use of this clustering is that some algorithms can be adapted to have some decisions made by clusters rather than nodes. Since clusters need to aggregate information to their centers in order to perform the computation necessary to make these decisions, the algorithm is slowed down by a factor of $O(q)$, where $q$ is the radius of the cluster. However, the number of clusters of radius $q$ adjacent to any node (which one can think of as the effective degree) is only $2^{O(\frac{\alpha}{q})}$. So, if an algorithm with a round complexity that is a function of maximum degree $T(\Delta)$ can be adapted in this way, it would require $O(qT(2^{O(\frac{\alpha}{q})}))$ rounds. We adapt algorithms with $T(\Delta) = O(\frac{\log\Delta}{\log\log\Delta})$ complexities, and so we end up with complexities of

\[O(\frac{q\log 2^{O(\frac{\alpha}{q})}}{\log\log2^{O(\frac{\alpha}{q})}}) = O(\frac{\alpha}{\log\frac{\alpha}{q}}) = O(\frac{\log n}{\log^2\log n})\enspace.\]

However, the adaptations required to the algorithms are not straightforward - significant changes are required to the previous algorithms for both problems in order to preserve critical functionality of the algorithms while passing some of the decisions to clusters. One major issue is that the process described above happens in parallel for all values of $q$, from $0$ to $\log^{3/\const} n$, which adds a level of `asynchrony' that must be accounted for in the algorithm.

\section{Distributed Approximate Maximum Matching}
In this section we give our approximate maximum matching algorithm. This algorithm is derived by adapting the following simple framework (Algorithm \ref{alg:basicmatch}): a fractional matching is maintained by setting all edge weights initially to $\frac{1}{2\Delta}$, and iteratively raising them by a multiplicative factor of $K$ only if the sum of weights around each endpoint is at most $\frac{1}{2K}$. This ensures that the sum of weights around any vertex will always be at most $\frac12$. Also, in every two rounds, edges attempt to join the matching with probability equal to their weight, and are successful if none of their neighbors also attempt to join in the same round (this takes two rounds per iteration because it takes a communication round for nodes to determine if they successfully joined, and a second round to inform their neighbors).

\RestyleAlgo{boxruled}
\begin{algorithm}[H]
	\caption{\textsc{Matching Framework}}
	\label{alg:basicmatch}
	
	\textbf{At each edge $e = \{u,v\}$:}
	
	Set edge weight $w(e)\gets \frac{1}{2\Delta}$.
	Repeat every two rounds: 
	\begin{itemize}
		\item If $\sum_{e' \ni u} w(e') \le \frac{1}{2K}$ and $\sum_{e' \ni v} w(e') \le \frac{1}{2K}$, $w(e)\gets Kw(e)$.
		\item Self-nominate with probability $w(e)$.
		\item If $e$ self-nominates and no neighbouring edges do, \\ $e$ joins $M$, edge $e$ and all neighboring edges become\\ inactive (set $w(e) = 0$ permanently).
	\end{itemize}
\end{algorithm}

It can be shown that, if $K$ is set to be $\log^{\Theta(1)}\Delta$, this algorithm will (with moderately high probability, see Section \ref{sec:successprob}) give a $2+\eps$-approximate matching in $O(\frac{\log \Delta}{\log\log\Delta})$ rounds, for $\eps \ge 2^{-\log^{0.9} \Delta}$. We will not prove this here, since we are concerned with our adapted version achieving an $O(\frac{\log n}{\log^2\log n})$ complexity. We are not aware of this algorithm being explicitly previously published, but it seems likely that it was known by the time of \cite{BCGS17}; the algorithm of \cite{BCGS17} takes a similar approach but applied to the more difficult problem of obtaining a nearly-maximal independent set (from which results on matching are then derived), as opposed to Algorithm \ref{alg:basicmatch} which obtains a nearly-maximal matching. 

In our adapted algorithm (Algorithm \ref{alg:MM}), the main difference from Algorithm \ref{alg:basicmatch} is that the fractional matching will now be updated by clusters rather than individual edges (but the choice of self-nominating is still done individually by edges). Let $\const\ge 1000$ be a sufficiently large constant, used as the parameter in the decomposition algorithm of Lemma \ref{lem:localpart}, that will be chosen based on the value of $\delta$ in the statement of Theorem \ref{thm:MM}, and let $K:=\log^{99/\const} n<\log^{0.1} n$ be the value determining the speed of updating the fractional matching. The algorithm works by having nodes place `caps' on the total weight that \emph{clusters} are allowed to place on their adjacent edges, in order to ensure that the probability of having an adjacent edge attempt to join $M$ is sufficiently small. Clusters greedily assign weight to their member edges while respecting these caps. If a cluster meets the cap given to it by a node $v$, and that cap is still sufficiently below $1$, then in the next iteration the cap is increased by a factor of $K^2$. Otherwise, if the cluster does not meet the cap, then in the next iteration the cap is decreased by a factor of $K$. This process ensures that, for a sufficiently large number of rounds, any edge $\{u,v\}$ in the graph will have weight at least $\frac14 K^{-3}$ at one of its endpoints. Then, in such rounds, the random self-nomination step has an $\Omega(K^{-3})$ probability of removing $\{u,v\}$ (by putting it, or one of its neighboring edges, in the output matching). This is sufficient to obtain a $2+\eps$-approximate maximum matching.

\RestyleAlgo{boxruled}
\begin{algorithm}[H]
	\caption{\textsc{Approximate Maximum Matching}}
	\label{alg:MM}
	Apply the algorithm of Lemma \ref{lem:localpart} to the line graph of $G$, i.e. to cluster the \emph{edges} of the graph. For each cluster $C$, let $r_C$ denote its radius.
	
	\textbf{At each node $u$:}
	\begin{itemize}
		\item Initially set cap $\kappa_{u,C} = e^{\frac{-4\alpha}{1+r_{C}}}$, for all adjacent clusters $C$.
		\item Repeat: 
		\begin{enumerate}
			\item Send $\kappa_{u,C}$ to cluster $C$.
			\item Cluster $C$ will (in $2( r_{C}+1)$ rounds) return weights $w(\{u,v\})$ for all $\{u,v\} \in C$. Do:
			\begin{itemize}
				\item If $\sum_{\{u,v\} \in C }w(\{u,v\}) = \kappa_{u,C}$ and $\sum_{\text{clusters }C}\kappa_{u,C} \le \frac 14 K^{-3}$, $\kappa_{u,C} \gets K^2\kappa_{u,C}$.
				\item If $\sum_{\{u,v\} \in C }w(\{u,v\}) < \kappa_{u,C}$, $\kappa_{u,C} \gets K^{-1}\kappa_{u,C}$.
			\end{itemize}
		\end{enumerate}
	\end{itemize}
	
	\textbf{At each cluster $C$:}
	\begin{itemize}
		\item Repeat every $2( r_{C}+1)$ rounds: 
		\begin{enumerate}
			\item Collect all caps $\kappa_{u,C}$ to the cluster center.
			\item Cluster center first sets $w(\{u,v\}) \gets K^{-1} w(\{u,v\})$\\ for all active $\{u,v\}\in C$, then greedily increases each\\ such $w(\{u,v\})$ until either $\sum_{\{u,x\} \in C }w(\{u,x\}) = \kappa_{u,C}$\\ or $\sum_{\{v,x\} \in C }w(\{v,x\}) = \kappa_{v,C}$.
			\item Return new weights $w(\{u,v\})$ to nodes $u$ and $v$.
		\end{enumerate}
	\end{itemize}
	
	\textbf{At each edge $e$:}
	\begin{itemize}
		\item Repeat every two rounds: 
		\begin{enumerate}
			\item Self-nominate with probability $w(e)$.
			\item If $e$ self-nominates and no neighbouring edges do,\\ $e$ joins $M$, edge $e$ and all neighboring edges become inactive (set $w(e) = 0$ permanently).
		\end{enumerate}
	\end{itemize}
\end{algorithm}

\subsection{Analysis}
For an edge $\{u,v\}$, we analyze only the rounds in which $\{u,v\}$'s cluster $C$ receives its new caps $\kappa_{w,C}$. this occurs every $2( r_{C}+1)$ rounds. We call these rounds \emph{pivotal} rounds for $\{u,v\}$. We show that either one endpoint of $\{u,v\}$ already has a high total cap, or the product of its endpoints' caps for cluster $C$ increases by a factor of $K$. For the purposes of discriminating between the old and new caps, in the following lemma we denote by $\kappa_{u,C}$ the cap of node $u$ for cluster $u$ \emph{before} updating in the pivotal round, and we denote by $\kappa'_{u,C}$ the same cap \emph{after} updating.

\begin{lemma}\label{lem:pivotal}
	In each pivotal round for an active edge $\{u,v\}$, either 
	\begin{itemize}
		\item $\sum_{\text{clusters }C'} \kappa'_{u,C'} > \frac 14 K^{-3}$,
		\item $\sum_{\text{clusters }C'} \kappa'_{v,C'} > \frac 14 K^{-3}$,
		\item or $\kappa'_{u,C}\kappa'_{v,C} \ge  K\kappa_{u,C}\kappa_{v,C}$.
	\end{itemize}
\end{lemma}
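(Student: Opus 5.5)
The argument is a case analysis on the greedy weight assignment that cluster $C$ performs in the pivotal round. By construction the greedy step stops raising $w(\{u,v\})$ only when one endpoint's constraint becomes tight. Since $\{u,v\}$ is still active, after the greedy step we have $\sum_{\{u,x\}\in C} w(\{u,x\}) = \kappa_{u,C}$ or $\sum_{\{v,x\}\in C} w(\{v,x\}) = \kappa_{v,C}$. By symmetry, assume $u$'s constraint is tight.

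Next we apply $u$'s update rule with the returned weights. The cluster's sum for $u$ equals $\kappa_{u,C}$, so the decrease branch does not fire for $(u,C)$. There are two possibilities.
\begin{itemize}
\item[(i)] If $\sum_{C'}\kappa_{u,C'} \le \frac14 K^{-3}$ (sum of old caps), then $\kappa'_{u,C}=K^2\kappa_{u,C}$.
\item[(ii)] Otherwise $\kappa'_{u,C}=\kappa_{u,C}$, and the old total at $u$ already exceeds $\frac14K^{-3}$.
\end{itemize}
In case (ii) we must show the new total still exceeds $\frac14 K^{-3}$, so that the first alternative of the lemma holds. Other caps $\kappa_{u,C'}$ may have been divided by $K$ in the same round, which could pull the total below the threshold. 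This is the main obstacle.

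To handle it, I would set up an invariant, proved by induction over rounds, that $\sum_{C'}\kappa_{u,C'} \le K^{-1}$ at all times. Initially each cap is $e^{-4\alpha/(1+r_C)}$. By Lemma~\ref{lem:localpart}, applied to the line graph, the number of adjacent clusters with radius $\ge q$ is at most $e^{3\alpha/(q+1)}$. Summing over $q$, the initial total is at most roughly $\sum_q e^{-\alpha/(q+1)}$, which is tiny. A cap only increases when the pre-update total is $\le \frac14K^{-3}$, and then each cap grows by a factor of at most $K^2$, so the total stays $\le \frac14K^{-1}$.

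For case (ii), I expect one needs the caveat that the lemma refers to the pre-update total, or one must choose the reading of the update rule consistent with the algorithm. My first attempt would be to argue that the tight cap $\kappa_{u,C}$ itself is not decreased. If that fails to carry the total, I would reinterpret the first bullet as applying to old caps $\kappa$, which is what the downstream argument uses: a high total means an $\Omega(K^{-3})$ chance of removal. Concretely, I would claim that if the total before the update exceeds $\frac14K^{-3}$, the condition of the lemma is witnessed.

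In case (i), $v$'s cap for $C$ changes by a factor of at least $K^{-1}$, since the only decrease is division by $K$. Hence $\kappa'_{u,C}\kappa'_{v,C}\ge K^2\kappa_{u,C}\cdot K^{-1}\kappa_{v,C}=K\kappa_{u,C}\kappa_{v,C}$, which gives the third alternative. The same holds with $u$ and $v$ swapped, and this completes the case analysis.
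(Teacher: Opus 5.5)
Your case analysis is the paper's proof. The greedy step makes one endpoint of the active edge tight (WLOG $u$). Then either $u$'s cap total exceeds $\frac14K^{-3}$, or $\kappa_{u,C}$ is multiplied by $K^2$ while $\kappa_{v,C}$ falls by at most a factor $K$, which gives $\kappa'_{u,C}\kappa'_{v,C}\ge K\kappa_{u,C}\kappa_{v,C}$.

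The obstacle you identify in case (ii) is genuine, and the paper's proof does not overcome it either. It infers from the \emph{pre-update} inequality $\sum_{C'}\kappa_{u,C'}>\frac14K^{-3}$ that ``the first point of the lemma is satisfied'', although that point is stated for the updated caps $\kappa'$.

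The primed version need not hold. Clusters of equal radius (e.g.\ all radius-$0$ clusters of the line graph) return weights in the same rounds. Suppose most of $u$'s total sits on a cap $\kappa_{u,C''}$ that $C''$ fails to meet in that round, for instance because its edges at $u$ went inactive or the other endpoint was tight. Then that cap is divided by $K$, and $u$'s total can fall well below $\frac14K^{-3}$. Meanwhile $\kappa'_{u,C}=\kappa_{u,C}$, $\kappa'_{v,C}=K^{-1}\kappa_{v,C}$, and $v$'s total may be small, so none of the three primed alternatives holds.

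So neither of your attempts can close case (ii):
\begin{itemize}
\item Keeping the single cap $\kappa_{u,C}$ fixed does not keep the total above the threshold.
\item Your invariant $\sum_{C'}\kappa_{u,C'}\le K^{-1}$ is an upper bound, whereas case (ii) needs a lower bound on the new total. It is essentially Lemma~\ref{lem:capslow} and plays no role in this lemma.
\end{itemize}

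Your fallback is the right resolution: state the first two alternatives for the caps before the update. That is exactly what the argument (yours and the paper's) establishes. It is also the form, with unprimed caps, in which the proof of Lemma~\ref{lem:edgeremoved} uses the lemma.
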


\begin{proof}
	Since $\{u,v\}$ is active, when its cluster $C$ updates edge weights, the new weights will satisfy either $\sum_{\{u,x\} \in C }w(\{u,x\}) = \kappa_{u,C}$, or $\sum_{\{v,x\} \in C }w(\{v,x\}) = \kappa_{v,C}$: otherwise, $C$ could have further increased $w(\{u,v\})$. Assume without loss of generality that $\sum_{\{u,x\} \in C }w(\{u,x\}) = \kappa_{u,C}$. Then, when node $u$ updates its caps, either $\sum_{\text{clusters }C'} \kappa_{u,C'} > \frac 14 K^{-3}$ (i.e. the first point of the lemma is satisfied), or $u$ updates $\kappa'_{u,C} \gets K^2\kappa_{u,C}$. In the latter case, since $\kappa'_{v,C}\ge K^{_-1}\kappa_{v,C}$, we have that  $\kappa'_{u,C}\kappa'_{v,C} \ge  K\kappa_{u,C}\kappa_{v,C}$, i.e. the third point is satisfied. 
\end{proof}

We next show that the sum of caps around a node never gets too close to $1$.

\begin{lemma}\label{lem:capslow}
	For any node $v$, at any point in the algorithm, we have $\sum_{\text{clusters }C'} \kappa_{v,C'} \le \frac 14K^{-1}$.
\end{lemma}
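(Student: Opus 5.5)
We argue by induction over time. The sum $\sum_{C'}\kappa_{v,C'}$ can only change when $v$ updates one of its caps, so it suffices to check the initial state and every update step.

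\textbf{Initial state.} We have $\kappa_{v,C} = e^{-4\alpha/(1+r_C)}$. We group the adjacent clusters by radius $q\in\{0,1,\dots,\log^{3/\const} n\}$. By Lemma~\ref{lem:localpart}, applied to the line graph, node $v$ sees at most $e^{3\alpha/(q+1)}$ clusters of radius at least $q$, and in particular at most that many of radius exactly $q$. (If one is pedantic, an edge cluster adjacent to $v$ is adjacent in the line graph to one of $v$'s incident edges; the decomposition bound is per node of the line graph, so the bound is applied to any single incident edge, or absorbed into constants.) The radius-$q$ clusters therefore contribute at most $e^{3\alpha/(q+1)}\cdot e^{-4\alpha/(q+1)} = e^{-\alpha/(q+1)}$. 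Since $q+1\le 2\log^{3/\const} n$, we get $\alpha/(q+1)\ge \frac{\const\log n}{2\log\log n\,\log^{3/\const}n}\ge \log^{1/2} n$ for $\const\ge 1000$. Summing over the at most $\log^{3/\const}n+1$ values of $q$ gives a total of at most $(\log^{3/\const} n+1)e^{-\log^{1/2}n}$. This is far below $\frac14 K^{-1}=\frac14\log^{-99/\const}n$ for large $n$.

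\textbf{Update steps.} Caps either decrease by a factor $K$, which cannot break the bound, or increase by a factor $K^2$. An increase of $\kappa_{v,C}$ happens only when the condition $\sum_{C'}\kappa_{v,C'}\le\frac14K^{-3}$ holds at the time of the update. The key point is that all of $v$'s increases decided in a given step are based on a sum that is at most $\frac14K^{-3}$. Multiplying every term by at most $K^2$ yields a new sum of at most $\frac14K^{-1}$.

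The main obstacle is asynchrony. Different clusters return at different rounds, so $v$ updates caps for different clusters at different times. The check at one update may have used a sum that was later increased by another cluster's update.

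To handle this, we interpret the algorithm so that in each round $v$ evaluates the condition against its current sum, including all previous increases. Among updates occurring in the same round, each uses the same pre-round sum $S\le \frac14K^{-3}$. After all of that round's updates the sum is at most $K^2S\le\frac14K^{-1}$.

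An update in a later round sees the current sum. If that sum exceeds $\frac14K^{-3}$, no increase happens. Otherwise the sum is again at most $\frac14K^{-3}$, and the same $K^2$ bound applies.

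Thus the invariant holds after every round, and the induction closes.
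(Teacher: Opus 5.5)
Your proof is correct and follows the same route as the paper. You bound the initial sum by grouping adjacent clusters by radius $q$, so that by Lemma~\ref{lem:localpart} each group contributes at most $e^{3\alpha/(q+1)}\cdot e^{-4\alpha/(q+1)}=e^{-\alpha/(q+1)}$; then you note that a cap is multiplied by $K^2$ only when the current total is at most $\frac14K^{-3}$, so the total never exceeds $\frac14K^{-1}$. Your extra remarks only make explicit what the paper leaves implicit: the line-graph bound applies at any single edge incident to $v$ because all edges at $v$ are pairwise adjacent in the line graph, and several cap updates in one round are handled correctly. You also use the correct radius bound $\log^{3/\ell} n$, where the paper's display writes $\log^{1/3} n$.
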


\begin{proof}
	At the beginning of the algorithm, we have $\kappa_{v,C'} = e^{\frac{-4\alpha}{1+r_{C'}}}$. By Lemma \ref{lem:localpart}, the number of adjacent clusters to $v$ of radius $r_{C'}$ is at most $e^{\frac{3\alpha}{1+r_C}}$. So, the initial value of $\sum_{\text{clusters }C'} \kappa_{v,C'}$ is at most:
	\begin{align*}
	\sum_{r_{C'} = 0}^{\log^{1/3} n}e^{\frac{-4\alpha}{1+r_{{C'}}}} \cdot e^{\frac{3\alpha}{1+r_{C'}}}
	&\le \sum_{r_{C'} = 0}^{\log^{1/3} n}e^{\frac{-\alpha}{1+r_{{C'}}}}\\
	&\le \sum_{r_{C'} = 0}^{\log^{1/3} n}e^{\frac{-\alpha}{2\log^{1/3} n}}\\
	&\le (\log^{1/3}n+1) e^{-8\log^{0.6}n}\\
	&\le \frac14K^{-3}\enspace.
	\end{align*}
	
	Here we used throughout that $n$ is at least a sufficiently large constant.
	
	Subsequently in Algorithm \ref{alg:MM}, caps $\kappa_{v,C'}$ are only raised (by factor $K^2$) if $\sum_{\text{clusters }C'}\kappa_{v,C'} \le \frac 14 K^{-3}$, so we maintain the property $\sum_{\text{clusters }C'}\kappa_{v,C'} \le \frac 14K^{-1}$.
\end{proof}

An upper bound on the sum of caps around a node implies an upper bound on the sum of weights around the node:

\begin{lemma}\label{lem:weightslow}
	For any node $v$, at any point in the algorithm, we have $\sum_{\{u,v\}\in E} w(\{u,v\}) \le \frac 14$.
\end{lemma}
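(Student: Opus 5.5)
The plan is to bound the weight that each cluster places around $v$ separately, comparing it with $v$'s current cap for that cluster, and then to sum over clusters using Lemma~\ref{lem:capslow}. Since every edge belongs to exactly one cluster (we clustered the line graph), we can write $\sum_{\{u,v\}\in E} w(\{u,v\}) = \sum_{C} W_{v,C}$, where $W_{v,C} := \sum_{\{v,x\}\in C} w(\{v,x\})$. Inactive edges have weight $0$ and only make this smaller. The target is the per-cluster bound $W_{v,C} \le K\kappa_{v,C}$ at every point in time, where $\kappa_{v,C}$ is $v$'s current cap. Given that, Lemma~\ref{lem:capslow} yields
\[\sum_C W_{v,C}\le K\sum_C \kappa_{v,C} \le K\cdot\tfrac14K^{-1} = \tfrac14 .\]

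\textbf{Step 1: the greedy update respects the caps it was given.} Consider an update of $C$ that uses the collected cap $\kappa$ at $v$. Inductively, the previous weights of $C$ satisfied $W_{v,C}\le\kappa_{\mathrm{old}}$, where $\kappa_{\mathrm{old}}$ is the cap used in $C$'s previous update (initially all weights are $0$, or trivially small). The node $v$ changes $\kappa_{v,C}$ only in response to $C$ returning weights, and then either multiplies it by $K^2$ or divides it by $K$. Hence the cap $\kappa$ now collected satisfies $\kappa\ge K^{-1}\kappa_{\mathrm{old}}$. After the center rescales by $K^{-1}$, we have $W_{v,C}\le K^{-1}\kappa_{\mathrm{old}}\le\kappa$. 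So the starting point of the greedy increase is feasible. The greedy step stops raising an edge as soon as either endpoint's cap becomes tight, so afterwards $W_{v,C}\le\kappa$ still holds. This completes the induction: the weights returned by $C$ never exceed the cap used to compute them.

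\textbf{Step 2: handling the asynchrony.} This is the only delicate point. The weights currently in force for $C$ were computed from a cap $\kappa$, but after $v$ receives them it may lower its cap to $\kappa' = K^{-1}\kappa$ before $C$ recomputes. I would handle this by noting that between two consecutive updates of $C$, the cap $\kappa_{v,C}$ changes at most once, and that change is at least a factor $K^{-1}$. Therefore at every moment the current cap satisfies $\kappa_{v,C}\ge K^{-1}\kappa$, which gives $W_{v,C}\le\kappa\le K\kappa_{v,C}$.

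Edges also become inactive in the self-nomination steps, but this only sets weights to $0$ and so only decreases $W_{v,C}$. Summing the per-cluster bound over all clusters and applying Lemma~\ref{lem:capslow} as above gives the lemma. The main thing to verify carefully is the claim that the cap changes at most once, by a bounded factor, between consecutive updates of $C$. This follows from the algorithm's description, because $v$ updates $\kappa_{v,C}$ only on receipt of new weights from $C$.
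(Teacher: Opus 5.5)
Your proof is correct and follows the paper's argument. You prove the per-cluster bound $\sum_{\{v,x\}\in C} w(\{v,x\}) \le K\kappa_{v,C}$ (weights respect the cap when they are computed, and the cap drops by at most a factor $K$ before the next update), then sum over clusters using Lemma~\ref{lem:capslow}. Your Step~1 induction fills in a detail the paper leaves implicit: the $K^{-1}$-rescaled weights already respect the newly collected cap, so the greedy increase cannot overshoot. That induction relies on the initial weights being $0$, which the algorithm assumes tacitly.
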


\begin{proof}
	By Lemma \ref{lem:capslow}, $\sum_{\text{clusters }C'} \kappa_{v,C'} \le \frac 14K^{-1}$. For any cluster $C'$,  $\sum_{\{u,v\} \in C' }w(\{u,v\}) \le K \kappa_{v,{C'}}$; this is since, when weights are updated, they satisfy $\sum_{\{u,v\} \in C' }w(\{u,v\}) \le \kappa_{v,{C'}}$, and when caps are updated, they are reduced by at most factor $K$. So, summing over all adjacent clusters to $v$, $\sum_{\{u,v\}\in E} w(\{u,v\}) \le \frac 14$. 
\end{proof}

Now we are ready to show the main technical result, a lower bound on the probability that an edge is removed from the graph over the course of the algorithm.

\begin{lemma}\label{lem:edgeremoved}
	For any constant $\delta>0$, there exists constant $\const$ such that if Algorithm \ref{alg:MM} is run for $\frac{\const^2\log n}{\log^2 \log n}$ rounds (not including the time of performing \textsc{Partition}), any edge $\{u,v\}$ is removed (becomes inactive) with probability at least $1-e^{- \Omega(\log^{1-\delta} n)}$.
\end{lemma}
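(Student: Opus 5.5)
Fix an edge $e=\{u,v\}$ in cluster $C$ of radius $r:=r_C$, and assume $e$ stays active throughout; we bound the probability of this. The plan has two parts: a deterministic counting argument on pivotal rounds, and a probabilistic argument for the rounds in which some endpoint has large total cap.

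\emph{Step 1 (counting).} The algorithm runs for $T=\frac{\const^2\log n}{\log^2\log n}$ rounds, so $e$ has about $T/(2(r+1))$ pivotal rounds. Classify each pivotal round by Lemma \ref{lem:pivotal}: a round is \emph{good} if an endpoint has total cap $>\frac14K^{-3}$, and \emph{growing} otherwise, in which case $\kappa_{u,C}\kappa_{v,C}$ grows by a factor $K$. In every round the product drops by at most $K^{2}$, since each cap falls by at most $K$. By Lemma \ref{lem:capslow} the product never exceeds $\frac1{16}K^{-2}$, and initially it equals $e^{-8\alpha/(1+r)}$. With $m_g$ growing rounds and $m_b$ good rounds, the potential argument gives
\[
m_g\log K-2m_b\log K\le \frac{8\alpha}{1+r}.
\]
Since $\log K=\frac{99}{\const}\log\log n$ and $\alpha=\frac{\const\log n}{\log\log n}$, we have
\[
\frac{8\alpha}{(1+r)\log K}=O\!\left(\frac{\const^2\log n}{99\,(1+r)\log^2\log n}\right).
\]
For $\const$ large relative to the constants, this is at most a small fraction (say $\frac1{10}$) of the total pivotal count $T/(2(r+1))$. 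Hence at least a constant fraction of pivotal rounds are good, giving $\Omega\!\left(\frac{T}{r+1}\right)\ge\Omega\!\left(\frac{\const^2\log n}{\log^{2+3/\const}n}\right)$ good pivotal rounds, using $r\le\log^{3/\const}n$.

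\emph{Step 2 (removal probability in a good round).} If $u$ has total cap $>\frac14K^{-3}$ after a pivotal round, the weights returned by each cluster $C'$ satisfy $\sum_{\{u,x\}\in C'} w\ge$ something comparable to $\kappa_{u,C'}$. Here is the subtlety: caps that were just raised were met, and caps that were reduced were missed, so the relation must be tracked carefully. Caps are updated only at the pivotal rounds of their own cluster, and the asynchrony across clusters of different radius is exactly the obstacle. I would argue that between consecutive updates of $\kappa_{u,C'}$ the weight on $u$'s $C'$-edges is at least $K^{-2}\kappa_{u,C'}$, or else the cap is cut. Then, over the $2(r+1)$ rounds following a good pivotal round, the total weight around $u$ is at least $\Omega(K^{-5})$ in each round, unless edges are removed, which only helps. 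Given total weight $W\ge K^{-5}$ at $u$ and, by Lemma \ref{lem:weightslow}, total weight at most $\frac14$ at every node, the standard estimate applies: the probability that some edge at $u$ self-nominates with no neighbouring edge nominating is at least $\Omega(W)$. If this happens, $e$ becomes inactive. Each good round therefore removes $e$ with probability $\Omega(K^{-5})=\Omega(\log^{-495/\const}n)$, independently across rounds given the history.

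\emph{Step 3 (combine).} The number of good rounds times the per-round success probability is $\log^{1-O(1/\const)}n$. Choosing $\const$ large enough that $O(1/\const)<\delta$, the probability that $e$ survives is at most $(1-\Omega(K^{-5}))^{\#\text{good}}\le e^{-\Omega(\log^{1-\delta}n)}$. The main obstacle is Step 2: converting "large total cap" into "large actual weight present in the rounds where the self-nomination happens", given that clusters of different radii refresh at different times. I would first try to prove an invariant that actual weight is at least a $K^{-O(1)}$ fraction of the cap, using the rule that unmet caps are immediately cut by $K$.
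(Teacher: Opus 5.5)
Your Step 1 is the paper's potential argument and is fine, with one quibble. The ratio of $\frac{8\alpha}{(1+r)\ln K}$ to the number of pivotal rounds $\frac{T}{2(r+1)}$ does not shrink as $\ell$ grows, since both scale as $\ell^2$. It is a fixed constant, about $\frac{16}{99\ln 2}\approx 0.23$, determined by the exponent $99$ in $K$. That is still below $1$, so about a quarter of the pivotal rounds are still good.

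\textbf{The genuine gap is Step 2.} Your per-round claim, that after a good pivotal round the actual weight at $u$ is $\Omega(K^{-5})$, is false. The faulty sentence is ``unless edges are removed, which only helps.''
\begin{itemize}
\item An edge $\{u,x\}\in C'$ becomes inactive whenever some $\{x,y\}$ with $y\notin\{u,v\}$ joins $M$.
\item This leaves $\{u,v\}$ active but permanently zeroes $u$'s weight on $\{u,x\}$.
\item Meanwhile $\kappa_{u,C'}$ only drops by a factor $K$ once every $2(r_{C'}+1)$ rounds, which can be $\Theta(\log^{3/\ell}n)$ rounds.
\end{itemize}
So $u$'s total cap can stay above $\frac14K^{-3}$, making many consecutive pivotal rounds of a small-radius $C$ `good', while the actual weight at both $u$ and $v$ is essentially zero. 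The $\Omega(K^{-5})$ removal probability per good round then fails.

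Without removals, your invariant does hold once the cap has first been met. Weights shrink by at most $K$ per cluster update and unmet caps shrink by exactly $K$, which gives weight at least $K^{-2}$ times the current cap. So the real obstruction is removals, not the asynchrony you flagged.

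\textbf{The missing idea is to amortize over time instead of bounding weight round by round.}
\begin{itemize}
\item Pick a set $S$ of good pivotal rounds pairwise at least $2\log^{3/\ell}n$ apart, so every cluster updates between consecutive elements of $S$. This is how the paper handles asynchrony, at a cost of a factor $\log^{3/\ell}n$ in $|S|$.
\item For each $C'$ and each endpoint $u$, cut time at the rounds $i$ where $C'$ meets $u$'s cap; there the weight equals the cap.
\item Until the next such round $j$, the cap is at most $K^2\kappa^i_{u,C'}$ and falls by $K$ per $C'$-period. Hence $\sum_{i'\in S\cap[i,j)}\kappa^{i'}_{u,C'}\le 2K^2\kappa^i_{u,C'}=2K^2\sum_{\{u,x\}\in C'}w^i(\{u,x\})$.
\item Summing over intervals, clusters and both endpoints gives a deterministic bound on the \emph{cumulative} weight over all rounds: $\sum_t\bigl(W^t_u+W^t_v\bigr)\ge\frac12K^{-2}\cdot\frac14K^{-3}|S|$.
\end{itemize}
This bound is immune to later removals, because the weight present at a meeting round is counted however soon it disappears.

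To finish, use your per-round estimate that, given the history, $\{u,v\}$ is removed with probability $\Omega(W^t_u+W^t_v)$; the paper gets $\frac14(W^t_u+W^t_v)$ by taking the heavier endpoint. The survival probability is then at most $\exp\bigl(-\frac14\sum_t(W^t_u+W^t_v)\bigr)$. This replaces your $(1-\Omega(K^{-5}))^{\#\text{good}}$ and yields $e^{-\Omega(\log^{1-\delta}n)}$ once $\ell$ is a large enough multiple of $1/\delta$.
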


\begin{proof}
	Edge $\{u,v\}$ has at least $P:=\frac{\const^2\log n}{2(1+r_C) \log^2 \log n}$ pivotal rounds, where $C$ is the cluster of $\{u,v\}$. We bound the number of these pivotal rounds in which $\kappa_{u,C}\kappa_{v,C}$ increases by a factor of at least $K$. Denote this number $c$. Since otherwise $\kappa_{u,C}\kappa_{v,C}$ decreases by a factor of at most $K^2$, and by Lemma \ref{lem:capslow} $\kappa_{u,C}\kappa_{v,C}\le \frac{1}{16} K^{-2}$, we have \[K^{c}\cdot K^{-2(P-c)}\cdot e^{\frac{-8\alpha}{1+r_C}}\le \frac{1}{16} K^{-2}\enspace.\]
	
	This gives:
	
	\begin{align*}
	c\ln K - 2(P-c)\ln K - \frac{8\alpha}{1+r_C} &\le \ln (\frac {1}{16}K^{-2})\\
	(3c-2P)\ln K - \frac{8\alpha}{1+r_C} &\le 0\\
	3c-2P  &\le \frac{8\alpha}{(1+r_C)\ln K}  \\
	c  &\le \frac{\frac{8\alpha}{(1+r_C)\ln K} +2P}{3}
	\end{align*}
	
	Plugging in $\alpha = \frac{\const\log n}{\log\log n}$, $P=\frac{\const^2\log n}{2(1+r_C) \log^2 \log n}$, and $K=\log^{99/\const}n$, 
	
	\begin{align*}
	c  &\le \frac{\frac{8\const^2\log n}{99(1+r_C)\ln \log n\cdot \log\log n}+\frac{\const^2\log n}{(1+r_C) \log^2 \log n} }{3} \\
	&\le\frac{\frac{8}{99\ln 2}+1}{3}\cdot \frac{\const^2\log n}{(1+r_C) \log^2 \log n}\\
	&< \frac{0.38\const^2\log n}{(1+r_C) \log^2 \log n}\enspace.
	\end{align*}
	
	So, there are at least $P-\frac{0.38\const^2\log n}{(1+r_C) \log^2 \log n}\ge\frac{0.12\const^2\log n}{(1+r_C) \log^2 \log n} \ge \frac{0.12\const^2\log n}{(1+\log^{3/\const} n) \log^2 \log n} \ge \frac{\const^2\log^{1-3/\const} n}{9\log^2 \log n}$ pivotal rounds in which $\kappa_{u,C}\kappa_{v,C}$ does not increase by a factor of at least $K$. By Lemma \ref{lem:pivotal}, in these rounds either $\sum_{\text{clusters }C'} \kappa_{u,C'} > \frac 14 K^{-3}$ or $\sum_{\text{clusters }C'} \kappa_{v,C'} > \frac 14 K^{-3}$, so $\sum_{\text{clusters }C'} (\kappa_{u,C'}+\kappa_{v,C'}) > \frac 14 K^{-3}$.
	
	Furthermore, we can choose at least $\frac{\const^2\log^{1-6/\const} n}{18\log^2 \log n}$ such pivotal rounds that are each at least $2\log^{3/\const}n$ rounds apart from each other. Denote by $S$ the set of such rounds, and denote by $\kappa^i_{v,C'}$ the value of $\kappa_{v,C'}$ at round $i$. Then, we have $\sum_{i\in S}\sum_{\text{clusters }C'} (\kappa^i_{u,C'}+\kappa^i_{v,C'}) > \frac 14 K^{-3} |S| \ge \frac{\const^2\log^{1-6/\const} n}{72K^{3}\log^2 \log n}$. 
	
	\hide{Suppose W.L.O.G. that $$\sum_{i\in S}\sum_{\text{clusters }C'} \kappa^i_{v,C'} > 2 K^{-3}\log^{0.35} n\enspace.$$}
	
	Now, we need to bound the \emph{weights} contributed by each cluster around a node $v$, rather than just the caps. Suppose that cluster $C'$ meets its cap for node $v$ and round $i$, and the next time it does so is at round $j$. That is,
	
	\begin{itemize}
		\item $\sum_{\{v,x\} \in C' }w^i(\{v,x\}) = \kappa^i_{v,C'} $;
		\item $\sum_{\{v,x\} \in C' }w^{j}(\{v,x\}) = \kappa^j_{v,C'} $; and
		\item for all $i'\in [i+1, j-1]$, $\sum_{\{v,x\} \in C' }w^{i'}(\{v,x\}) < \kappa^{i'}_{v,C'} $.
	\end{itemize}
	
	Then, observe how $\kappa^i_{v,C'}$ changes between rounds $i$ and $j$: we have $\kappa^{i'}_{v,C'} \le \kappa^i_{v,C'} K^2$ for all $i'\in [i+1, j-1]$, since $\kappa_{v,C'}$ may increase by a factor of $K^2$ in round $i$ but will henceforth not increase again until round $j$. We also have that every $2r_{C'}\le 2\log^{0.3}n$ rounds during the interval $[i+1, j-1]$, $\kappa_{v,C'}$ will decrease by a $K^{-1}$ factor. So, 
	
	\[\sum_{i' \in [i, j-1] \cap S}\kappa^i_{v,C'} \le \sum_{t=0}^{\infty} \kappa^i_{v,C'} K^{2-t} \le 2K^2\kappa^i_{v,C'} = 2K^2\sum_{\{v,x\} \in C' }w^i(\{v,x\})\enspace.\] 
	
	Summing over all such intervals $[i,j-1]$, we get:
	\[\sum_{i' \in S}\kappa^i_{v,C'} \le 2K^2\sum_{\{v,x\} \in C' }\sum_{i\in [\frac{\const^2\log n}{\log^2 \log n}]}w^i(\{v,x\})\enspace.\]
	
	So, 
	
	\begin{align*}
	\sum_{\{v,x\} \in E(v)}\sum_{i\in [\frac{\const^2\log n}{\log^2 \log n}]}w^i(\{v,x\})
	&= \sum_{\text{clusters $C'$}}\sum_{\{v,x\} \in C' }\sum_{i\in [\frac{\const^2\log n}{\log^2 \log n}]}w^i(\{v,x\})\\
	&\ge \frac12 K^{-2} \sum_{\text{clusters $C'$}}\sum_{i' \in S}\kappa^i_{v,C'}
	\enspace.
	\end{align*}
	
	Summing over both endpoints of edge $\{u,v\}$:
	\begin{align*}
	\sum_{i\in [\frac{\const^2\log n}{\log^2 \log n}]}\left( \sum_{\{v,x\} \in E(v)}w^i(\{v,x\}) + \sum_{\{u,x\} \in E(u)}w^i(\{u,x\}) \right)
	&\ge
	\frac12 K^{-2} \sum_{\text{clusters $C'$}}\sum_{i' \in S}\kappa^i_{v,C'} + \kappa^i_{u,C'}\\
	&\ge \frac{\const^2\log^{1-6/\const} n}{144K^{5}\log^2 \log n}\enspace.
	\end{align*}
	
	Without loss of generality, assume that $v$ is the endpoint of edge $\{u,v\}$ with higher total weight in round $i$, i.e.  $$\sum_{\{v,x\} \in E(v)}w^i(\{v,x\}) \ge \sum_{\{u,x\} \in E(u)}w^i(\{u,x\})\enspace.$$ Then, the probability that edge $\{u,v\}$ becomes inactive in round $i$ is at least
	
	\begin{align*}
	&\sum_{\{v,x\} \in E(v)} \Prob{\text{$\{v,x\}$ self-nominates and no neighbbouring edge does}}\\
	&\hspace{1in}\ge \sum_{\{v,x\} \in E(v)} w(\{v,x\}) \cdot \prod_{\{v,y\} \in E(v)}\left(1-w(\{v,y\})\right)\cdot \prod_{\{x,y\} \in E(x)}\left(1-w(\{x,y\})\right)\\
	&\hspace{1in}\ge \sum_{\{v,x\} \in E(v)} w(\{v,x\}) \cdot 4^{-\prod_{\{v,y\} \in E(v)}w(\{v,y\})}  \cdot 4^{-\prod_{\{x,y\} \in E(x)}w(\{x,y\})} \\
	&\hspace{1in}\ge \sum_{\{v,x\} \in E(v)} w(\{v,x\}) \cdot 4^{-\frac14}  \cdot 4^{-\frac14} \\
	&\hspace{1in}= \frac 12\sum_{\{v,x\} \in E(v)} w(\{v,x\})  \\
	&\hspace{1in}\ge \frac 14\left( \sum_{\{v,x\} \in E(v)}w^i(\{v,x\}) + \sum_{\{u,x\} \in E(u)}w^i(\{u,x\}) \right)\enspace. \\
	\end{align*}
	Here we used the inequality $1-t \ge 4^{-t}$ for $t \in [0, \frac12]$.
	
	Furthermore, this probability bound holds regardless of the random choices in any other round. So, the probability that edge $\{u,v\}$ is not inactive after $\frac{\const^2\log n}{\log^2 \log n}$ rounds is at most 
	
	\begin{align*}
	&	\prod_{i\in [\frac{\const^2\log n}{\log^2 \log n}]}\left(1- \frac 14\left( \sum_{\{v,x\} \in E(v)}w^i(\{v,x\}) + \sum_{\{u,x\} \in E(u)}w^i(\{u,x\}) \right)\right)\\
	&\hspace{1in}\le
	\prod_{i\in [\frac{\const^2\log n}{\log^2 \log n}]}e^{- \frac 14\left( \sum_{\{v,x\} \in E(v)}w^i(\{v,x\}) + \sum_{\{u,x\} \in E(u)}w^i(\{u,x\}) \right)}\\
	&\hspace{1in}= e^{-\frac14\sum_{i\in [\frac{\const^2\log n}{\log^2 \log n}]}\left( \sum_{\{v,x\} \in E(v)}w^i(\{v,x\}) + \sum_{\{u,x\} \in E(u)}w^i(\{u,x\}) \right)}\\
	&\hspace{1in}\le e^{- \frac{\const^2\log^{1-6/\const} n}{576K^{5}\log^2 \log n}}\\
	&\hspace{1in}\le e^{- \frac{\const^2\log^{1-501/\const} n}{576\log^2 \log n}}\\
	&\hspace{1in}\le e^{- \Omega(\log^{1-\delta} n)}\enspace.
	\end{align*}
	
	letting $\const:=502/\delta$. 
\end{proof}

It remains to prove the approximation ratio of the resulting matching:

\begin{proof}[Proof of Theorem \ref{thm:MM}]
	By Lemma \ref{lem:edgeremoved}, after $O(\frac{\log n}{\log^2 \log n})$ rounds each edge has a matched endpoint with probability at least $1-e^{- \Omega(\log^{1-\delta} n)}$. In particular, each edge in the (canonical) optimum maximum matching $OPT$ remains active with probability at most $e^{- \Omega(\log^{1-\delta} n)}$. Denote by $\lambda$ the number of edges in $OPT$ that remain active. Then, $\Exp{\lambda} \le |OPT|e^{- \Omega(\log^{1-\delta} n)}$. By Markov's inequality, $\Prob{\lambda\ge \frac{\eps}{3} |OPT|}\le \frac{3}{\eps} \cdot e^{- \Omega(\log^{1-\delta} n)} =\frac{1}{\eps} e^{- \Omega(\log^{1-\delta} n)}$. 
	
	Since each edge placed in the matching $M$ by Algorithm \ref{alg:MM} renders at most two edges in $OPT$ inactive, with probability at least $1-\frac{1}{\eps} e^{- \Omega(\log^{1-\delta} n)}$ we have 
	
	\[|M| \ge \frac{1}{2}(|OPT| - \lambda) \ge \frac{1}{2}(|OPT| - \frac{\eps}{3} |OPT|) =  \frac{3-\eps}{6}|OPT| \ge \frac{|OPT|}{2+\eps}\enspace. \]
\end{proof}

\subsection{Discussion of success probability}\label{sec:successprob}

We note that, while this probability of success in Theorem \ref{thm:MM} is slightly weaker than the standard definition of `with high probability' (i.e. probability $1-n^{-c}$ for some $c\ge 1$), this is not unprecedented for approximate matching algorithms, particularly those taking $o(\log n)$ rounds. The main algorithm of \cite{BCGS17}, for example, obtains a bound of $2^{-\log^{0.7} n}$ on the probability that any particular edge remains active, similar to (but slightly weaker than) our bound of $e^{- \Omega(\log^{1-\delta} n)}$ here. While \cite{BCGS17} claims a `with high probability' bound, and sketches an argument that the number of remaining active edges is concentrated around its expectation (that would also apply to our algorithm if true), this argument fails for $\Delta\gtrapprox n^{1/3}$. The preprint version of \cite{BCGS17} gives an alternative algorithm that succeeds with probability $e^{- \Omega(\eps|OPT|)}$, which is stronger in many cases but still is not always w.h.p. under the standard definition.

\hide{
	\subsubsection{Concentration Bound}

	\begin{proof}
		From proof of Lemma \ref{lem:edgeremoved}, we have that for any edge $\{u,v\}$, at least one of its endpoints (W.L.O.G. named $v$) satisfies:
		
		\begin{align*}
		\sum_{\{v,x\} \in E(v)}\sum_{i\in [\frac{9000\log n}{\log^2\log n}]}w^i(\{v,x\})\ge
		4 \log^{0.3} n\enspace.
		\end{align*}
		
		This event happens with certainty, irrespective of the random choices during the algorithm. Let $T$ be some subset of the canonical maximum matching $OPT$ of size $\log^{1.7} n$. Summing over all edges in $T$, 
		
		\begin{align*}
		\sum_{i\in [\frac{9000\log n}{\log^2\log n}]}\sum_{\{u,v\} \in T}\sum_{\{v,x\} \in E(v)}w^i(\{v,x\})\ge
		4 \log^{2} n\enspace.
		\end{align*}
		
		We bound the probability that all edges in $T$ remain active at the end of the algorithm. In each round $i$, let $f_i$ denote the number of edges $\{u,v\}$ for which node $v$ is matched in round $i$. 
		
		VVV THe problem is that choices of edges distance $2$ away are also relevant, and they increase the McD bound too much.

		So, there must be at least 
\end{proof}}

\section{Distributed Approximate Minimum Weighted Vertex Cover}
We now give our algorithm for approximate minimum weighted vertex cover, which is inspired by the algorithms of \cite{BCS17} and \cite{BEKS18}, but requires significant changes to incorporate the graph decomposition.

\subsection{Algorithm Background}

\newcommand{\Amount}{t}
\newcommand{\Deal}{request}
\newcommand{\Vault}{vault}
\newcommand{\Bank}{bank}
\newcommand{\Budget}{budget}
\newcommand{\InCover}{\texttt{InCover}}
\newcommand{\NotInCover}{\texttt{NotInCover}}
\newcommand{\ceil}[1]{\left\lceil #1 \right\rceil}
\newcommand{\floor}[1]{\left\lfloor #1 \right\rfloor}

\newcommand{\parentheses}[1]{\left(#1\right)}
\newcommand{\logp}[1]{\log\parentheses{#1}}
\newcommand{\set}[1]{\left\{#1\right\}}

\newcommand{\Dv}{\ensuremath{\Delta}}
\newcommand{\Kv}{\ensuremath{K}}

\newcommand{\maxr}{\ensuremath{ \log^{0.1} n}}

The algorithm follows the local ratio template also used by \cite{LPP15, BCS17,BEKS18} (from which the following paragraph is adapted, but with notation altered to match the algorithms): 

Assume a given weighted graph $G = (V, w, E)$, where $w_0 : V \rightarrow \mathbb R^+$ is an assignment of weights for the vertices\footnote{The problem of approximate minimum weighted vertex cover only makes sense with non-negative weights - if negative weights are allowed, then an isolated pair of nodes can be added to any instance in order to set the minimum cover's weight to $0$, in which case any approximate minimum cover is an exact minimum cover.}. Let $\delta : E \rightarrow  \mathbb R^+$ be a function that assigns weights to edges. We say that $\delta$ is $G$-valid if for every $v\in V $, $\sum_{e:v\in e} \delta(e)\le w_0(v)$, that is, the sum of weights of edges that touch a vertex is at most the weight of that vertex in $G$. Fix any G-valid function $\delta$. Define $w_{\delta}: V \rightarrow \mathbb R^+$ by $w_{\delta}=\sum_{e: v\in e}\delta(e)$, and let $w: V \rightarrow \mathbb R^+$ such that $w(v) = w_0(v)- w_{\delta}(v)$. Let $S_\delta = \{v \in V : w(v) \le \eps' w_0(v)\}$, where $\eps'= \eps/(2 + \eps)$.

\begin{theorem}[Theorem 2.1 of \cite{BCS17}]\label{thm:localratio}
	Fix $\eps > 0$ and let $\delta$ be a G-valid function. Let $OPT$ be the sum of weights
	of vertices in a minimum weight vertex cover $S_{OPT}$ of G. Then $\sum_{v\in S_\delta}w(v)\le (2+\eps)OPT$.
	In particular, if $S_\delta$ is a vertex cover, then it is a $(2 + \eps)$-approximation for MWVC for G.
\end{theorem}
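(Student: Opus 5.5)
The argument is the standard local-ratio charging against an optimal cover. Write $w_\delta(v)=\sum_{e\ni v}\delta(e)$ and $w(v)=w_0(v)-w_\delta(v)\ge 0$; the latter holds because $\delta$ is $G$-valid. The plan is to bound $\sum_{v\in S_\delta} w_0(v)$, the actual cost of the output set. (The displayed inequality in the statement should be read this way, since that is what the ``in particular'' sentence needs.) We split this cost into a part paid for by $\delta$ and a residual part, and then bound the $\delta$-part by $2\cdot OPT$.

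\textbf{Step 1 (relate $w_0$ to $w_\delta$ on $S_\delta$).} For $v\in S_\delta$ we have $w_0(v)-w_\delta(v)=w(v)\le \eps' w_0(v)$. Hence $w_0(v)\le \frac{1}{1-\eps'}w_\delta(v)$. Since $\eps'=\eps/(2+\eps)$, we get $1-\eps'=\frac{2}{2+\eps}$, and therefore
\[
\sum_{v\in S_\delta} w_0(v)\le \frac{2+\eps}{2}\sum_{v\in S_\delta} w_\delta(v)\le \frac{2+\eps}{2}\sum_{v\in V}w_\delta(v).
\]

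\textbf{Step 2 (bound the total $\delta$-mass by $OPT$).} We have $\sum_{v\in V} w_\delta(v)=2\sum_{e\in E}\delta(e)$, because each edge is counted once at each endpoint. Now $S_{OPT}$ is a vertex cover, so every edge has at least one endpoint in $S_{OPT}$, and $\delta\ge 0$. This gives
\[
\sum_{e\in E}\delta(e)\le \sum_{u\in S_{OPT}}\sum_{e\ni u}\delta(e)\le \sum_{u\in S_{OPT}} w_0(u)=OPT,
\]
where the last inequality is again $G$-validity.

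\textbf{Step 3 (combine).} Putting the two steps together,
\[
\sum_{v\in S_\delta}w_0(v)\le \frac{2+\eps}{2}\cdot 2\,OPT=(2+\eps)OPT.
\]
The residual weights are bounded in the same way, since $\sum_{v\in S_\delta} w(v)\le \sum_{v\in S_\delta} w_0(v)$. If $S_\delta$ is a vertex cover, its cost is therefore within a factor $2+\eps$ of optimal, which is the ``in particular'' claim.

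The only delicate point is Step 1: choosing $\eps'=\eps/(2+\eps)$ is exactly what turns the factor $1/(1-\eps')$ into $(2+\eps)/2$, which then cancels the factor $2$ from double-counting edges in Step 2. Everything else is direct.
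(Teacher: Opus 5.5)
Your proof is correct, and it is the standard local-ratio (equivalently, weak-duality) charging argument for this result: $(1-\eps')\sum_{v\in S_\delta}w_0(v)\le\sum_{v\in V}w_\delta(v)=2\sum_{e\in E}\delta(e)\le 2\,OPT$. The paper gives no proof of its own and imports the theorem from \cite{BCS17}; your choice to prove the bound for $w_0$ is the right reading, since in \cite{BCS17} the symbol $w$ denotes the input weights, whereas the residual-weight inequality as literally displayed here is much weaker (indeed $\sum_{v\in S_\delta}w(v)\le\eps'\sum_{v\in S_\delta}w_0(v)\le\eps\,OPT$).
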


This framework is converted into an algorithm in \cite{BCS17} as follows: we start off with $\delta$ being uniformly $0$, and $w=w_0$ being the weight function given as input for the MWVC instance. Then, we iteratively assign some weight to edges, i.e. increase $\delta(e)$ for some $e$. This in turn raises $w_\delta$ and reduces $w$ for the endpoints of that edge by the same amount (though in the algorithm of \cite{BCS17} and our own, only the change to $w$ is explicitly kept track of).  So long as we ensure that 
\begin{enumerate}
	\item all $w(v)$ values remain non-negative, indicating that $\delta$ is G-valid, 
	\item we only add nodes $v$ to the output set when $w(v)\le \eps' w_0(v)$, and
	\item the output set eventually forms a vertex cover,
\end{enumerate}

then by Theorem \ref{thm:localratio}, this output set is indeed a $(2 + \eps)$-approximate MWVC. 

An algorithm under this framework, therefore, need only specify which edges to assign weight to, and how much weight to assign, while respecting the above conditions. The algorithm of \cite{BCS17} does so by dividing nodes' current weight $w(v)$ into a \emph{bank} and a \emph{vault}. Nodes send requests along their adjacent edges to (implicitly) assign weight from their vault along that edge, and accordingly reduce the weight at each endpoint by the same amount. Requests are answered by the other endpoint $u$ of the edge by assigning weight from $u$'s bank. We follow the same approach, though we do not use the vault and bank terminology since our introduction of the graph decomposition makes this division less clear.

We note that \cite{BEKS18} generalized the algorithm of \cite{BCS17} to improve the dependency on $\eps$ to logarithmic; our algorithm is closer to that of \cite{BCS17}, since that of \cite{BEKS18} is harder to adapt directly, but we get the same improvement in $\eps$-dependency via a process that essentially repeats the algorithm of \cite{BCS17} for each of the $1+O(\frac{\log 1/\eps}{\log\log n})$ \emph{levels} in the algorithm of \cite{BEKS18}.

\subsection{Algorithm Overview}

Our algorithm proceeds in $1+O(\frac{\log 1/\eps}{\log\log n})$ \emph{stages}, comparable to the levels in \cite{BEKS18} (though in the algorithm of \cite{BEKS18} the levels are handled concurrently, whereas here we handle them sequentially). The goal in each stage is to reduce all non-terminated nodes' weight by a factor of $\log^{\Theta(1)} n$.

Throughout the algorithm, nodes can be in one of four states: active, inactive, $\InCover$, and $\NotInCover$. $\InCover$, and $\NotInCover$ are terminal states, and indicate that the vertex is in and out of the final vertex cover respectively; once nodes reach those states, they will inform their neighbors and cease participation.

Each stage consists of two phases. In phase $1$, all non-terminated nodes start as active, and some will become inactive during the phase. No nodes terminate during phase $1$. Active nodes $u$ send requests asking to increase weight on their adjacent edges (and therefore correspondingly reduce their own weight $w(u)$). Requests are sent by individual nodes, but are answered by the recipient's cluster. Because requests are answered with a delay (based on the diameter of the recipient's cluster), weight sent in requests is ``locked up'' until the request is answered, and therefore nodes do not know exactly how much weight they have available. Because of this, when clusters fulfil requests, they must leave sufficient weight aside to cover all requests that $v$ might send before the cluster next receives updated weights. Nodes become inactive once their weight reduces by a $\log^{0.1}n$ factor, compared to their weight at the start of the phase. The goal of phase $1$ is to ensure that there are no edges between active nodes, i.e. at least one endpoint of every edge becomes inactive.

In phase $2$, active nodes again send requests. Since there are no edges between active nodes, requests are only received by inactive nodes. Again, requests are answered by clusters, but since they are answered only on behalf of inactive nodes, no weight needs to be held back to account for weight locked up in sent requests. So, the cluster can use the entire remaining weight of nodes to answer requests, and knows the true current value of $w(v)$ for all inactive nodes $v$ it contains (since such weight values are only updated by the cluster itself). 

During phase $2$, no nodes become inactive, but nodes $v$ can terminate in one of two ways: if $v$'s weight reaches the target value of $\eps' w_0(v)$, then $v$ can join the vertex cover. If $v$ has no non-terminated neighbors, it outputs $\NotInCover$. Note that in this case, all of $v$'s neighbors must have already output $\InCover$, since they could not have output $\NotInCover$ while $v$ was still non-terminated.

The goal of phase $2$ is to ensure than no active nodes remain, i.e. all active nodes will terminate. This means that at the end of a stage, the only non-terminated nodes are inactive, and therefore their weights have decreased by at least a $\log^{0.1}n $ factor since the start of the stage. Inactive nodes are set to be active again for the start of the next stage. 

Since each stage reduces weights by a $\log^{0.1}n $ factor, after $1+ \log_{\log^{0.1}n }\frac{1}{\eps'} = 1+O(\frac{\log 1/\eps}{\log\log n})$ stages all remaining nodes $v$ must have weight below $\eps'w_0(v)$, and so must terminate.

\subsection{Notation}
Our algorithm uses the following notation:

\begin{itemize}
	\item $w_0(v)$ is the starting, input weight of node $v$.
	\item $w(v)$ is initialized to $w_0(v)$, but is decreased over the course of the algorithm.
	\item $w_i(v)$ is the value of $w(v)$ at the beginning of stage $i$. $w_i'(v)$ is the value of $w(v)$ at the beginning of phase $2$ of stage $i$.
	\item $w^t(v)$ is the value of $w(v)$ in round $t$ of the current phase. It is only used in the following way: clusters $C$ operate in periods of length $2(r_C+1)$, where $r_C$ is the radius of $C$. In each period, starting at round $2k(r_C+1)$ for some integer $k$, when the cluster wishes to use $w(v)$ it only knows the value at the start of the period, i.e. as it was at round $2k(r_C+1)$. Therefore, the notation $w^{2k(r_C+1)}$ is used to distinguish from the \emph{current} value $w(v)$, which may have changed since.
	\item $NC_{r}(v):=\{\text{clusters $C$ of radius $r$ containing active neighbors of $v$}\}$, and $d_r(v):= |NC_{r}(v)|$. These change over time as nodes become inactive or terminated, and we assume that they are updated by $v$ in every round.
	\item Similarly	$NC'_{r}(v):=\{\text{clusters $C$ of radius $r$ containing non-terminated neighbors of $v$}\}$, and $d'_r(v) := |NC'_{r}(v)|$. Again, these change as nodes terminate, and we assume that they are updated by $v$ each round.
	\item $E_C := \{(u,v): \{u,v\}\in E \land v \in C\}$, i.e. the set of edges adjacent to nodes in $C$, directed so that edges with both endpoints in $C$ appear twice. 
\end{itemize}

\begin{algorithm}[htbp]
	\SetKwProg{Fn}{}{ }{}
	\caption{Approximate MWVC, Overall Algorithm}
	\label{alg:MWVC}
	Apply the algorithm of Lemma \ref{lem:localpart} with $\const = 30$
	
	\For{$i \in [1+O(\frac{\log 1/\eps}{\log\log n})]$, in Stage $i$,}{
		All non-terminated nodes are set to `active'
		
		Run Phase 1 (Algorithm \ref{alg}) for $O(\frac{\log n}{\log^2\log n})$ rounds

		Run Phase 2 (Algorithm \ref{alg2}) for $O(\frac{\log n}{\log^2\log n})$ rounds}
\end{algorithm}

\begin{algorithm}[h]
	\SetKwProg{Fn}{}{ }{}
	\caption{Approximate MWVC, Phase 1 of Stage $i$}
	\label{alg}
	\textbf{At each active node $v$:}\Fn{}{		
		\ForEach{$r\in [\maxr]$, every $2(r+1)$ rounds}
		{	
			\ForEach{cluster $C \in NC_r(v)$}
			{		
				$\Deal(v,C) \gets  \frac{w_i(v)}{ d_r(v)\log^{0.3} n}$\\
				Send $\Deal(v,C)$ to cluster $C$\\
				Within $2(r+1)$ rounds, $C$ responds with $\Budget(v,C)$\\
				$w(v) \gets  w(v) - \Budget(v,C)$\\			
			}
		}
		\textbf{In every round,} \If{$w(v) \le \log^{-0.1}n \cdot w_i(v)$}
		{
			$v$ becomes inactive, informs neighbors
		}
	}
	\textbf{At each cluster $C$, every $2(r_C+1)$ rounds, starting at round $2k(r_C+1)$:} \Fn{}{	
		
		Initially set $\Budget(u,v) \gets  0$ for all $(u,v) \in E_C$.
		
		\ForEach{$(u,v) \in E_C$, in any order}
		{
			$\Budget(u,v) \gets \min\{\Deal(u,C) - \sum\limits_{(u,w) \in E_C}{\Budget(u,w)} , w^{2k(r_C+1)}(v)-\log^{-0.1}n \cdot w_i(v)- \sum\limits_{(w,v) \in E_C}{\Budget(w,v)} \}$\\
			
		}
		\ForEach{$u\in N(C)$}
		{
			Send $\Budget(u,C) \gets \sum\limits_{(u,w) \in E_C}{\Budget(u,w)} $ to $u$\\
			
		}
		\ForEach{$v\in C$}
		{
			$w(v) \gets w(v) - \sum\limits_{(w,v) \in E_C}{\Budget(w,v)} $\\
		}
	}
	
\end{algorithm}

\begin{algorithm}[h]
	\SetKwProg{Fn}{}{ }{}
	\caption{Approximate MWVC, Phase 2 of Stage $i$}
	\label{alg2}	
	\textbf{At each active node $v$:}\Fn{}{		
		\ForEach{$r\in [\maxr]$, every $2(r+1)$ rounds}
		{		
			\ForEach{cluster $C \in NC_r(v)$}
			{		
				$\Deal(v,C) \gets  \frac{w'_i(v)}{ d'_r(v)\log^{0.2}n}$\\
				Send $\Deal(v,C)$ to cluster $C$\\
				Within $2(r+1)$ rounds, $C$ responds with $\Budget(v,C)$\\
				$w(v) \gets  w(v) - \Budget(v,C)$\\			
			}
		}
		
	}
	\textbf{At each node $v$, every round:}\Fn{}{		
		\If{$w(v) \le \eps' w_0(v)$}
		{
			$v$ outputs $\InCover$, informs neighbors, terminates
		}
		\If{$v$ has no non-terminated neighbors}
		{
			$v$ outputs $\NotInCover$, informs neighbors, terminates
		}
	}
	\textbf{At each cluster $C$, every $2(r_C+1)$ rounds, starting at round $2k(r_C+1)$:} \Fn{}{	
		
		Initially set $\Budget(u,v) \gets  0$ for all $(u,v) \in E_C$.
		
		\ForEach{$(u,v) \in E_C$, in any order}
		{
			$\Budget(u,v) \gets \min\{\Deal(u,C) - \sum\limits_{(u,w) \in E_C}{\Budget(u,w)} , w^{2k(r_C+1)}(v)- \sum\limits_{(w,v) \in E_C}{\Budget(w,v)} \}$\\
			
		}
		\ForEach{$u\in N(C)$}
		{
			Send $\Budget(u,C) \gets \sum\limits_{(u,w) \in E_C}{\Budget(u,w)} $ to $u$\\
			
		}
		\ForEach{$v\in C$}
		{
			$w(v) \gets w(v) - \sum\limits_{(w,v) \in E_C}{\Budget(w,v)} $\\
		}
	}

\end{algorithm}

\subsection{Analysis}
Algorithm \ref{alg:MWVC} initially applies $\textsc{Partition}(F)$ (Algorithm \ref{alg:part}) with $F(x) =  1-(1+x)^{-\alpha}$, $\alpha = \frac{30\log n}{\log\log n}$. By Lemma \ref{lem:localpart}, therefore, all clusters are of radius at most $\log^{0.1} n$, and every node has at most $e^{\frac{90\log n}{(q+1)\log\log n}}$ adjacent clusters of radius $q$, with high probability. We now analyze progress during the algorithm's phases.

\paragraph{Analysis of Phase 1}

The first property to check is that all node weights remain non-negative, which is non-trivial due to the asynchronicity introduced by the clustering.

\begin{lemma}\label{lem:posweights}
	For any node $v$, during any Phase 1, $w(v)\ge 0$.
\end{lemma}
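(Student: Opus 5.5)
Note first that $w(v)$ decreases in Phase 1 in only two ways: as a \emph{requester}, when $v$ subtracts $\Budget(v,C)$ returned by a cluster $C$ in response to its own requests, and as a \emph{member} of its cluster $C_v$, when $C_v$ subtracts $\sum_{(w,v)\in E_{C_v}}\Budget(w,v)$ while answering neighbors' requests. I will show that the member deductions keep $w(v)$ at least $\log^{-0.1}n\cdot w_i(v)$ minus whatever is still locked in outstanding requests, and that the requester deductions made after $v$ drops to the inactivity threshold total at most $\log^{-0.1}n\cdot w_i(v)$. Together these give $w(v)\ge 0$.

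\textbf{Requester side.} Fix $r$. In each period of $2(r+1)$ rounds, $v$ sends to each $C\in NC_r(v)$ a request of $\frac{w_i(v)}{d_r(v)\log^{0.3}n}$. The budget returned never exceeds the request, because the first term of the $\min$ caps $\sum_{(v,w)\in E_C}\Budget(v,w)$ at $\Deal(v,C)$. Hence the total requested per period over radius $r$ is at most $\frac{w_i(v)}{\log^{0.3}n}$. Here $d_r(v)$ is the value used when the requests are sent, and $d_r(v)\ge |NC_r(v)|$ at that time. A request sent by $v$ while active is answered within $2(r+1)$ rounds. So at any moment, the requests from $v$ sent but not yet deducted number at most one batch per radius $r\in[\log^{0.1}n]$. (Possibly two per radius, depending on how sending and answering align; I will check this against the period structure.) Their total is at most $2\log^{0.1}n\cdot\frac{w_i(v)}{\log^{0.3}n}=2\log^{-0.2}n\cdot w_i(v)$. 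Once $v$ falls to $\log^{-0.1}n\cdot w_i(v)$ it becomes inactive and sends no new requests, so afterwards it can lose at most this outstanding amount through its own requests.

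\textbf{Member side.} When $C_v$ answers at round $2k(r_{C_v}+1)$, the second term of the $\min$ ensures the total charged to $v$ in that period is at most $w^{2k(r_{C_v}+1)}(v)-\log^{-0.1}n\cdot w_i(v)$. The cluster's snapshot $w^{2k(r_{C_v}+1)}(v)$ may be stale. Between the snapshot and the deduction, $v$ may also have had requester deductions. I will bound these by the same outstanding-request amount, so after the cluster deduction
\[
w(v)\ge \log^{-0.1}n\cdot w_i(v)-2\log^{-0.2}n\cdot w_i(v).
\]
Each cluster update is only relative to its own snapshot, and the snapshot already reflects all earlier cluster deductions, since only the cluster modifies the member weight. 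So cluster deductions never accumulate beyond the threshold; only the requester deductions in flight can push $v$ below it.

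\textbf{Combining.} The worst case is $v$ sitting at the cluster-imposed floor $\log^{-0.1}n\cdot w_i(v)$ while all its outstanding requests are paid out. Then $w(v)\ge(\log^{-0.1}n-O(\log^{-0.2}n))w_i(v)\ge 0$ for large $n$. The main obstacle is the timing bookkeeping. I must show that the requester deductions occurring inside one cluster period are covered by the slack, and that these deductions do not overlap double-counted with requests already subtracted before the snapshot. I will handle this by charging each requester deduction to the unique request batch that produced it, and observing that at most $O(1)$ batches per radius straddle any single snapshot interval of $C_v$.
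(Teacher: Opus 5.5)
Your requester-side accounting is fine: at most one or two batches per radius are in flight at any moment, so once $v$ is inactive its own requests cost it at most $2\log^{-0.2}n\cdot w_i(v)$. Your overall shape also matches the paper: cluster charges are capped at snapshot minus threshold, so only $v$'s own budgets can push it below $\log^{-0.1}n\cdot w_i(v)$.

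The gap is in the step you flag as the crux. You claim that the requester deductions between $C_v$'s snapshot $w^{2k(r_{C_v}+1)}(v)$ and $C_v$'s deduction are bounded by the same in-flight amount, because ``at most $O(1)$ batches per radius straddle any single snapshot interval of $C_v$.'' That is false for every radius $r<r_{C_v}$. The interval has length $2(r_{C_v}+1)$, and while $v$ is still active it sends, and has answered, a fresh radius-$r$ batch every $2(r+1)$ rounds. So about $\frac{r_{C_v}+1}{r+1}$ complete radius-$r$ batches are deducted inside one interval. These are not ``outstanding'' at any single moment, and none is reflected in the stale snapshot. Hence $C_v$ may still charge $v$ the full $w^{2k(r_{C_v}+1)}(v)-\log^{-0.1}n\cdot w_i(v)$ on top of them. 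Summed over radii they can total $\frac{w_i(v)}{\log^{0.3}n}\sum_{r}\bigl(1+\frac{r_{C_v}+1}{r+1}\bigr)$. For $r_{C_v}\approx\log^{0.1}n$ this is $\Theta(\log^{-0.2}n\log\log n)\,w_i(v)$, which exceeds your claimed $2\log^{-0.2}n\cdot w_i(v)$.

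The repair is the computation the paper's proof carries out:
\begin{enumerate}
\item Fix the period of $v$'s own cluster in which $w(v)$ first drops below $\log^{-0.1}n\cdot w_i(v)$.
\item Bound $v$'s own budget deductions over that whole period by summing, over $r\in[\log^{0.1}n]$, the per-batch total $\frac{w_i(v)}{\log^{0.3}n}$ times the number of radius-$r$ batches that fit in a window of length $2(r_{C_v}+1)$.
\item Use $r_{C_v}\le\log^{0.1}n$ and the harmonic sum to get $O(\log^{-0.2}n\log\log n)\,w_i(v)\le\log^{-0.1}n\cdot w_i(v)$ for large $n$.
\item Combine this with the cluster's charge of at most $w^{2k(r_{C_v}+1)}(v)-\log^{-0.1}n\cdot w_i(v)$ in that period to get $w(v)\ge 0$.
\end{enumerate}
With this count in place of your $O(1)$-per-radius claim, your two-part argument goes through. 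The final bound becomes $\bigl(\log^{-0.1}n-O(\log^{-0.2}n\log\log n)\bigr)w_i(v)$ rather than $\bigl(\log^{-0.1}n-O(\log^{-0.2}n)\bigr)w_i(v)$. The extra $\log\log n$ factor is harmless because $\log^{-0.2}n\log\log n=o(\log^{-0.1}n)$.
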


\begin{proof}
	Assume that $w(v)$ drops below $\log^{-0.1}n \cdot w_i(v)$ in some round $t'$ in Phase $1$ of some stage $i$. Let $k$ be the integer such that $t\in [2k(r_C+1), 2(k+1)(r_C+1))$, where $C$ is $v$'s own cluster - that is, $w(v)$ drops below $\log^{-0.1}n  \cdot w_i(v)$ during the $k^{th}$ period of $v$'s cluster's loop. We analyze how $w(v)$ decreases during this period:
	
	In resolving requests received by $v$, cluster $C$ reduces $w(v)$ by at most the value of $ w^{2k(r_C+1)}(v)-\log^{-0.1}n \cdot w_i(v)$ during the period. The amount that $w(v)$ reduces due to requests sent by $v$ during the period is at most $\sum_{r \in [\maxr]} \frac{ w_i(v)}{\log^{0.3} n} \cdot \frac{2(r+1)}{2(r_C+1)} \le \log^{-0.1}n \cdot w_i(v)$. Therefore $w^{2(k+1)(r_C+1)-1}(v)\ge w^{2k(r_C+1)}(v)-(w^{2k(r_C+1)}(v) -  \log^{-0.1}n  \cdot w_i(v)) -  \log^{-0.1}n  \cdot w_i(v) = 0$.
	
	Since $v$ becomes inactive after the period, further updates to $w(v)$ are made only by its cluster, which will never reduce $w(v)$ below $0$.
\end{proof}

The next lemma characterizes the progress of an active node $v$: whenever it receives \Budget\ values from clusters of radius $r$ (i.e. every $2(r+1)$ rounds), either $v$'s weight is decreased by a sufficient addive amount, or its number of such clusters containing active neighbors is reduced by a sufficient multiplicative amount.

\begin{lemma}\label{lem:progressP1a}
	For any active node $v$, when clusters of radius $r$ return \Budget\ values to a node $v$, either
	\begin{itemize}
		\item $w(v)$ is reduced by at least $\frac{ w_i(v)}{\log^{0.7} n}$, or
		\item $d_r(v)$ reduces by at least a multiplicative factor of $\log^{-0.4} n$.
	\end{itemize}
\end{lemma}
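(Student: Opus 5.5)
We fix a radius $r$ and one period in which $v$ sends requests to every $C\in NC_r(v)$. We then split the clusters of $NC_r(v)$ into \emph{satisfied} ones, with $\Budget(v,C)=\Deal(v,C)$, and \emph{unsatisfied} ones, with $\Budget(v,C)<\Deal(v,C)$. Write $d=d_r(v)$ for the value at the time the requests were sent. Each request has size $\frac{w_i(v)}{d\log^{0.3}n}$. So if at least $d\log^{-0.4}n$ clusters are satisfied, the total reduction of $w(v)$ from this batch is at least
\[
d\log^{-0.4}n\cdot\frac{w_i(v)}{d\log^{0.3}n}=\frac{w_i(v)}{\log^{0.7}n},
\]
and the first bullet of the statement holds. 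It remains to treat the case where more than $d(1-\log^{-0.4}n)$ clusters are unsatisfied. In that case the goal is the second bullet: every unsatisfied cluster should drop out of $NC_r(v)$. Then fewer than $d\log^{-0.4}n$ clusters remain in $NC_r(v)$, so $d_r(v)$ falls by the required factor.

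The key step is the claim that when $C$ does not fully meet $v$'s request, every neighbor of $v$ in $C$ is (or becomes at that point) inactive. We read this off the greedy loop in the cluster's procedure. For each edge $(v,u)\in E_C$ with $u\in C$, the value $\Budget(v,u)$ is set to the minimum of $v$'s remaining request and $u$'s remaining available weight $w^{2k(r_C+1)}(u)-\log^{-0.1}n\cdot w_i(u)-\sum\Budget(\cdot,u)$. If the request is never exhausted, then for every such $u$ the second term attained the minimum. After the loop, $u$'s available weight above the threshold is therefore zero. The cluster subtracts the total budget, so $w(u)\le \log^{-0.1}n\cdot w_i(u)$ once the update is applied. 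By the per-round rule of Phase 1, $u$ then becomes inactive and informs its neighbors. Any decrease from $u$'s own outgoing requests only lowers $w(u)$ further, so it does not affect the argument. Hence $C$ contains no active neighbor of $v$ and leaves $NC_r(v)$.

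The main obstacle is the timing and asynchrony. The cluster uses the snapshot $w^{2k(r_C+1)}(u)$ rather than the current weight, so I must confirm that the true current weight is at most this snapshot minus the budgets assigned. This holds because weights only decrease: $u$'s own requests subtract, and other clusters never modify $u$'s weight. Hence reaching the threshold on the snapshot implies reaching it on the actual weight. I also need the inactivation to be visible at $v$ by the time $v$ processes the budgets, so that $d_r(v)$ is updated. The budgets and the weight update are issued at the same cluster step and $u$ informs its neighbors within a round, and $d_r(v)$ is assumed to be updated every round, so the reduction is observed at (or immediately after) the moment the budgets return. Finally, there is one edge case: nodes $u$ that were already inactive before the request. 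These were not counted in $NC_r(v)$ at the time of sending, so they do not affect the count.
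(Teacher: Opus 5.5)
Your proof is correct and follows the same route as the paper. Like the paper, you split $NC_r(v)$ into clusters that fully meet $v$'s request and those that do not, and note that at least $d_r(v)\log^{-0.4}n$ fully met requests already give the $\frac{w_i(v)}{\log^{0.7}n}$ reduction; otherwise, via the greedy loop, every cluster that falls short has used up its members' weight above the $\log^{-0.1}n\cdot w_i$ threshold, so those members become inactive and the cluster leaves $NC_r(v)$. Your added checks on the snapshot value (weights only decrease between the snapshot and the update) and on when the inactivation becomes visible to $v$ spell out what the paper's proof leaves implicit.
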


\begin{proof}
	Fix any $r\in [\maxr]$, and consider any particular period of $2(r+1)$ rounds, where node $v$ sends $\Deal$\ values to clusters in $NC_r(v)$ at the start of the period, and receives \Budget\ values and updates its weight at the end of the period. Let $w(v)$, $d_r(v)$ denote the values at the start of the period, and $\hat w(v)$, $\hat d_r(v)$ denote the updated values at the end of the period.
	
	Assume that $w(v)$ is not reduced by at least $\frac{ w_i(v) }{\log^{0.7} n}$, i.e. $\hat w(v)> w(v) - \frac{ w_i(v)}{\log^{0.7} n}$. Then, there are at most $d_r(v)\log^{-0.4} n$ clusters $C\in NC_r(v)$ for which $\Budget(v,C)=\Deal(v,C)$. In all other clusters $C$, for all $(v,w)\in E_C$, $w$ is left with at most the $\log^{-0.1}n \cdot w_i(w)$ weight that is not used by clusters to answer requests, and therefore all such $w$ become inactive and $C$ leaves $NC_r(v)$. So, $\hat d_r(v) \le d_r(v)\log^{-0.4} n$.
\end{proof}

Now, we can show that after $O(\frac{\log n}{\log^2\log n})$ rounds, no edges remain between active nodes, which was the goal of phase 1.

\begin{lemma}\label{lem:progressP1b}
	After $O(\frac{\log n}{\log^2\log n})$ rounds of Phase 1, all nodes $v$ that are still active have no active neighbors.
\end{lemma}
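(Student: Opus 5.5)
Fix an active node $v$ and a radius $r\in[\maxr]$. The plan is to show that after $O(\frac{\log n}{\log^2\log n})$ rounds either $v$ has become inactive or $d_r(v)=0$ for every $r$. This suffices: if $d_r(v)=0$ for all $r$, then no cluster contains an active neighbor of $v$, so $v$ has no active neighbors. We apply Lemma \ref{lem:progressP1a} to the sequence of periods of length $2(r+1)$ in which $v$ receives \Budget\ values from radius-$r$ clusters.

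First, we count how often the first alternative of Lemma \ref{lem:progressP1a} can occur over the whole phase, summing over all $r$. Each occurrence reduces $w(v)$ by at least $\frac{w_i(v)}{\log^{0.7}n}$. Since $v$ becomes inactive once $w(v)\le \log^{-0.1}n\cdot w_i(v)$, and since $w(v)\ge 0$ by Lemma \ref{lem:posweights}, there are at most $\log^{0.7}n+1$ such occurrences while $v$ stays active. This budget is shared across all radii, which is fine for the counting that follows.

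Second, we bound the second alternative for fixed $r$. Initially, by Lemma \ref{lem:localpart} with $\const=30$, we have $d_r(v)\le e^{\frac{90\log n}{(r+1)\log\log n}}$. Each occurrence of the second alternative divides $d_r(v)$ by $\log^{0.4}n$. Because $d_r(v)$ is a nonnegative integer and never increases, $d_r(v)=0$ after at most
\[
\frac{90\log n}{(r+1)\log\log n\cdot 0.4\ln\log n}+1=O\!\left(\frac{\log n}{(r+1)\log^2\log n}\right)
\]
such periods. The radius-$r$ periods therefore need total time $2(r+1)\cdot O(\frac{\log n}{(r+1)\log^2\log n})=O(\frac{\log n}{\log^2\log n})$ for this alternative. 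The periods spent on the first alternative add at most $2(r+1)(\log^{0.7}n+1)\le O(\log^{0.8}n)=o(\frac{\log n}{\log^2\log n})$ rounds, using $r\le\maxr$. Combining the two, for every $r$ simultaneously, within $O(\frac{\log n}{\log^2\log n})$ rounds either $v$ is inactive or $d_r(v)=0$. A union bound is unnecessary, since the argument is deterministic given the high-probability clustering guarantee.

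The main obstacle is verifying that the per-$r$ accounting is legitimate. We must check that $d_r(v)$ is monotone non-increasing: nodes never reactivate within a phase, so the count of clusters containing active neighbors cannot grow. We must also check that the two alternatives can be charged independently per period, with the weight-reduction charges from different radii all drawing on the single $\log^{0.7}n$ budget. One further subtlety needs attention: when $d_r(v)$ is small, the reduction by the factor $\log^{-0.4}n$ must force $d_r(v)$ down to $0$ rather than stalling at $1$. Integrality handles this, because $\hat d_r(v)\le d_r(v)\log^{-0.4}n<1$ once $d_r(v)<\log^{0.4}n$.
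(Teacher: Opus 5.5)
Your proposal is correct and follows essentially the paper's proof. Like the paper, you fix $r$ and apply Lemma~\ref{lem:progressP1a} to the radius-$r$ periods of length $2(r+1)$. You start from the clustering bound $d_r(v)\le e^{\frac{90\log n}{(r+1)\log\log n}}$ and use that the weight of $v$ can drop by $\frac{w_i(v)}{\log^{0.7}n}$ only about $\log^{0.7}n$ times, which costs $O(\log^{0.8}n)$ rounds since $r\le\log^{0.1}n$. That explicit count makes rigorous the mixed case which the paper's phrasing ``either $w(v)$ falls below $0$ or $d_r(v)<1$'' glosses over.
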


\begin{proof}
	For any fixed node $v$ and $r\in [\maxr]$, $d_r(v)$ is initially at most $e^{\frac{90\log n}{(r+1)\log\log n}}$, by the properties of the clustering. In each of the $\Theta(\frac{\log n}{(r+1)\log^2\log n})$ periods of $2(r+1)$ rounds within the phase, either $w(v)$ is reduced by at least $\frac{ w_i(v)}{\log^{0.7} n}$, or $d_r(v)$ reduces by a multiplicative factor of at least $\log^{-0.4} n$. Therefore, choosing a sufficiently large constant in the $O(\frac{\log n}{\log^2\log n})$ running time of the phase, either $v$ becomes inactive during the phase, $w(v)$ falls below $w_i(v)-\frac{ w_i(v) }{\log^{0.7} n}\cdot \frac{\log n}{(r+1)\log^2\log n}< 0$ (which is not possible), or $d_r(v) \le e^{\frac{90\log n}{(r+1)\log\log n}} \cdot \left(\log^{-0.4}n\right)^{\frac{1000\log n}{(r+1)\log^2\log n}}<1$, i.e. $d_r(v)=0$. Since this holds for all $r\in [\maxr]$, if $v$ remains active it must have no remaining active neighbors. 
\end{proof}

\paragraph{Analysis of Phase 2}

Phase 2 is analyzed similarly. During phase 2 it is clear that no $w(v)$ values can drop below $0$, since active nodes only reduce $w(v)$ due to requests they send, and inactive nodes only reduce $w(v)$ due to requests their clusters receive. So, we proceed to a progress lemma similar to Lemma \ref{lem:progressP1a}; the difference is that we are now concerned with the number of clusters containing non-terminated, rather than active, neighbors. (Note that for an active node $v$, during phase $2$, its non-terminated neighbors must all be passive by Lemma \ref{lem:progressP1b}.)

\begin{lemma}\label{lem:progressP2}
	During phase 2, for any active node $v$, when clusters of radius $r$ return \Budget\ values to a node $v$, either
	\begin{itemize}
		\item $w(v)$ is reduced by at least $\frac{ w_i(v)}{\log^{0.7} n}$, or
		\item $d'_r(v)$ reduces by at least a multiplicative factor of $\log^{-0.4} n$.
	\end{itemize}
\end{lemma}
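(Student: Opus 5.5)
We mirror the proof of Lemma \ref{lem:progressP1a}, substituting the Phase 2 request sizes and the Phase 2 cluster rule. Fix $r\in[\maxr]$ and one period of $2(r+1)$ rounds. At the start of the period $v$ sends $\Deal(v,C)=\frac{w'_i(v)}{d'_r(v)\log^{0.2}n}$ to every $C\in NC_r(v)$, and at the end it receives the $\Budget(v,C)$ values. Let $d'_r(v)$ be the value at the start of the period and $\hat d'_r(v)$ the value at its end. Suppose $w(v)$ is \emph{not} reduced by at least $\frac{w_i(v)}{\log^{0.7}n}$. We aim to show $\hat d'_r(v)\le d'_r(v)\log^{-0.4}n$.

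\textbf{Step 1: few clusters fully answer the request.} Since $v$ is still active in phase 2, it was never made inactive in phase 1. Hence $w'_i(v) > \log^{-0.1}n\cdot w_i(v)$, and each request has size greater than $\frac{w_i(v)}{d'_r(v)\log^{0.3}n}$. If more than $d'_r(v)\log^{-0.4}n$ clusters returned $\Budget(v,C)=\Deal(v,C)$, the total reduction of $w(v)$ would exceed $\frac{w_i(v)}{\log^{0.7}n}$, contradicting the assumption. So at most $d'_r(v)\log^{-0.4}n$ clusters fully answer $v$'s request.

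\textbf{Step 2: every other cluster drains $v$'s neighbours.} Take a cluster $C\in NC'_r(v)$ with $\Budget(v,C)<\Deal(v,C)$. By the greedy rule of Algorithm \ref{alg2}, every edge $(v,x)\in E_C$ hit the second term of the minimum, so $x$'s entire weight $w^{2k(r_C+1)}(x)$ is consumed in that period. There is no $\log^{-0.1}n$ reserve in phase 2, so $w(x)$ drops to $0\le\eps' w_0(x)$.

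\textbf{Step 3: those neighbours terminate.} The only other changes to $w(x)$ come from $x$'s own cluster. This is because $x$ is non-terminated and adjacent to the active node $v$, so by Lemma \ref{lem:progressP1b} $x$ is inactive, and inactive nodes send no requests. Hence the snapshot value $w^{2k(r_C+1)}(x)$ equals the true current value. It follows that every non-terminated neighbour $x$ of $v$ in $C$ outputs $\InCover$ and terminates, and $C$ leaves $NC'_r(v)$.

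\textbf{Main obstacle.} The delicate point is the timing and asynchrony: the update by $C$ must occur within the period, before $v$ recomputes $\hat d'_r(v)$. Since $r_C=r$, $C$'s period coincides with $v$'s period for radius $r$. Termination by the $\eps' w_0(x)$ check happens in the same round the weight is updated, so by the end of the period these clusters are gone. A second point to check is that the greedy loop really exhausts $x$'s weight, and not merely the portion related to $v$. This holds because $\Budget(v,x)$ was capped by $x$'s remaining weight after all previously processed edges.

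Combining Steps 1–3, only the clusters that fully answered $v$'s request can remain in $NC'_r(v)$. Therefore $\hat d'_r(v)\le d'_r(v)\log^{-0.4}n$, as claimed.
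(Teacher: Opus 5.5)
Your proposal is correct and follows the paper's argument. At most $d'_r(v)\log^{-0.4}n$ clusters can fully grant $v$'s request. Every other cluster drains each of $v$'s non-terminated neighbours in it to zero weight, so those neighbours output the cover state, terminate, and the cluster leaves $NC'_r(v)$.

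You also supply details the paper leaves implicit: the bound $w'_i(v)>\log^{-0.1}n\cdot w_i(v)$, which turns the request size $\frac{w'_i(v)}{d'_r(v)\log^{0.2}n}$ into the $\log^{-0.7}n$ threshold, and the observation via Lemma \ref{lem:progressP1b} that those neighbours are inactive, so the cluster's snapshot weight is exact. One correction: your setup should send requests to every cluster in $NC'_r(v)$, as the paper's proof does, not $NC_r(v)$, which is empty for an active $v$ in phase 2.
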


\begin{proof}
	Fix any $r\in [\maxr]$, and consider any particular period of $2(r+1)$ rounds, where node $v$ sends $\Deal$\ values to clusters in $NC'_r(v)$ at the start of the period, and receives \Budget\ values and updates its weight at the end of the period. Let $w(v)$ and $d'_r(v)$ denote the values at the start of the period, and $\hat w(v)$, $\hat d'_r(v)$ denote the updated values at the end of the period.
	
	Assume that $w(v)$ is not reduced by at least $\frac{ w_i(v) }{\log^{0.7} n}$, i.e. $\hat w(v)> w(v) - \frac{ w_i(v)}{\log^{0.7} n}$. Then, there are at most $d_r(v)\log^{-0.4} n$ clusters $C\in NC'_r(v)$ for which $\Budget(v,C)=\Deal(v,C)$. In all other clusters $C$, for all $(v,w)\in E_C$, $w$ is left with $0$ weight and therefore joins the vertex cover and terminates, meaning that $C$ ceases to contain non-terminated neighbors of $v$ and leaves $NC'_r(v)$. So, $\hat d'_r(v) \le d'_r(v)\log^{-0.4} n$.
\end{proof}

Now we can show that all nodes that began phase 2 as active will terminate during the phase:

\begin{lemma}\label{lem:progressP2b}
	After $O(\frac{\log n}{\log^2\log n})$ rounds of phase 2, no nodes remain active.
\end{lemma}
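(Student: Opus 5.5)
The plan is to mirror the proof of Lemma \ref{lem:progressP1b}, using Lemma \ref{lem:progressP2} in place of Lemma \ref{lem:progressP1a}. Fix a node $v$ that is active at the start of Phase 2 of stage $i$, and fix a radius $r\in[\maxr]$. By the clustering properties, $d'_r(v)$ starts at most $e^{\frac{90\log n}{(r+1)\log\log n}}$, and it never increases, since nodes only terminate during Phase 2. A phase of $c\frac{\log n}{\log^2\log n}$ rounds, for a suitable constant $c$, contains $\Theta(\frac{\log n}{(r+1)\log^2\log n})$ complete periods of length $2(r+1)$. By Lemma \ref{lem:progressP2}, each period either reduces $w(v)$ by at least $\frac{w_i(v)}{\log^{0.7}n}$ or shrinks $d'_r(v)$ by a factor $\log^{-0.4}n$.

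\textbf{Step 1: bound the weight-reducing periods.} The total reduction of $w(v)$ is at most $w'_i(v)\le w_i(v)$, and Phase 2 never drives weights negative. Also, once $w(v)\le\eps' w_0(v)$ the node terminates, so we need only consider $v$ while it is still active. Hence at most $\log^{0.7}n$ periods, summed over all $r$, can be weight-reducing. This is negligible against the $\Theta(\frac{\log n}{(r+1)\log^2\log n})\ge\Theta(\log^{0.9}n/\log^2\log n)$ periods available for each $r\le\log^{0.1}n$.

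\textbf{Step 2: the remaining periods exhaust $d'_r(v)$.} The remaining periods, still $\Theta(\frac{\log n}{(r+1)\log^2\log n})$ many with a large constant $c$, each shrink $d'_r(v)$ by $\log^{0.4}n$. After them we get $d'_r(v)\le e^{\frac{90\log n}{(r+1)\log\log n}}\cdot(\log^{-0.4}n)^{\Theta(\frac{c\log n}{(r+1)\log^2\log n})}<1$, since $\log(\log^{0.4}n)=0.4\log\log n$ matches the exponent. So $d'_r(v)=0$ for every $r$. Then $v$ has no non-terminated neighbors and outputs $\NotInCover$, unless it has already terminated via $\InCover$. Either way it is no longer active.

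\textbf{Main obstacle: the budget/weight accounting in Step 1.} Requests in Phase 2 use $\frac{w'_i(v)}{d'_r(v)\log^{0.2}n}$, while Lemma \ref{lem:progressP2} measures progress against $w_i(v)$. I would need to check that the ratio $w_i(v)/w'_i(v)$ is at most $\log^{0.1}n$, which holds for nodes still active in Phase 2, so the count of weight-reducing periods is polylogarithmically small rather than unbounded. I would also handle the fact that $d'_r(v)$ changes mid-period; this only helps, since it is monotone non-increasing.
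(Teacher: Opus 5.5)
Your proposal is correct and follows the paper's proof. Both combine Lemma~\ref{lem:progressP2} with the initial bound $d'_r(v)\le e^{\frac{90\log n}{(r+1)\log\log n}}$ and a large enough constant in the phase length, so that $d'_r(v)$ reaches $0$ for every $r$ and $v$ terminates. Your Step~1, which caps the weight-reducing periods at roughly $\log^{0.7}n$ in total across all radii and uses the remaining $\Theta(\frac{\log n}{(r+1)\log^2\log n})$ periods to shrink $d'_r(v)$, makes rigorous the mixed case that the paper's trichotomy (all periods of one kind or all of the other) glosses over.
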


\begin{proof}
	For any fixed node $v$ and $r\in [\maxr]$, $d'_r(v)$ is initially at most $e^{\frac{90\log n}{(r+1)\log\log n}}$, by the properties of the clustering. In each of the $\Theta(\frac{\log n}{(r+1)\log^2\log n})$ periods of $2(r+1)$ rounds within the phase, either $w(v)$ is reduced by at least $\frac{ w_i(v)}{\log^{0.7} n}$, or $d'_r(v)$ reduces by a multiplicative factor of at least $\log^{-0.4} n$. Therefore, choosing a sufficiently large constant in the $O(\frac{\log n}{\log^2\log n})$ running time of the phase, either $v$ becomes inactive during the phase, $w(v)$ falls below $w_i(v)-\frac{ w_i(v) }{\log^{0.7} n}\cdot \frac{\log n}{(r+1)\log^2\log n}< 0$ (which is not possible), or $d'_r(v) \le e^{\frac{90\log n}{(r+1)\log\log n}} \cdot \left(\log^{0.5}n\right)^{- \frac{1000\log n}{(r+1)\log^2\log n}}<1$, i.e. $d'_r(v)=0$. Since this holds for all $r\in [\maxr]$, if $v$ remains active it must have no remaining non-terminated neighbors. Then, $v$ will output $\NotInCover$ and terminate.
\end{proof}

\paragraph{Overall Analysis}

It remains to prove the main result, Theorem \ref{thm:MWVC}.

\begin{proof}[Proof of Theorem \ref{thm:MWVC}]
	By Lemmas \ref{lem:progressP1b} and \ref{lem:progressP2b}, after both phases of a stage, all nodes either terminate or become inactive; in the latter case their weight decreases by a multiplicative factor of at least $\log^{-0.1}n$. So, after $1+O(\log_{\log^{-0.1}n} \eps') = 1+ O(\frac{\log 1/\eps}{\log\log n})$ stages, the weight of any non-terminated node $v$ must decrease below $\eps' w_0(v)$, and $v$ will therefore join the verex cover and terminate. So, all nodes terminate within $\left(1+O(\frac{\log 1/\eps}{\log\log n})\right) \cdot \frac{\log n}{\log^2\log n} = O(\frac{\log n}{\log^2\log n} + \frac{\log n\log 1/\eps}{\log^3\log n})$ rounds.
	
	The correctness of the minimum weight vertex cover follows from the local ratio template (Theorem \ref{thm:localratio}): we iteratively (implicitly) reduce weights across edges, correspondingly reducing node weights, while ensuring all node weights remain non-negative (Lemma \ref{lem:posweights}). Nodes only join the vertex cover when their weight is at most $\eps' w_0(v)$, and nodes only output $\NotInCover$ when all of their neighbors are in the cover.
\end{proof}

The round complexity can be simplified for $\eps = \log^{-O(1)} n$:

\begin{corollary}
	For any $\eps = \log^{-O(1)} n$, a $2+\eps$-approximate minimum weight vertex cover can be computed in $O(\frac{\log n}{\log^2\log n})$ rounds of \local, succeeding with high probability in $n$.
\end{corollary}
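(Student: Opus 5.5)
This follows directly from Theorem \ref{thm:MWVC} by substituting $\eps = \log^{-c} n$ for a constant $c \ge 0$. If $\eps$ is larger, say constant, then $\log 1/\eps$ is only smaller. Only the second term of the running time $O(\frac{\log n}{\log^2\log n} + \frac{\log n\log 1/\eps}{\log^3\log n})$ needs attention. With $\eps \ge \log^{-c} n$ we have $\log 1/\eps \le c\log\log n$, so
\[
\frac{\log n\log 1/\eps}{\log^3\log n} \le \frac{c\log n\log\log n}{\log^3\log n} = \frac{c\log n}{\log^2\log n},
\]
which is absorbed into the first term since $c$ is a constant. The total is therefore $O(\frac{\log n}{\log^2\log n})$ rounds.

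Two points need checking. First, the number of stages in Algorithm \ref{alg:MWVC} is $1+O(\frac{\log 1/\eps}{\log\log n})$, which for such $\eps$ is $O(1)$. Each stage runs Phases 1 and 2 for $O(\frac{\log n}{\log^2\log n})$ rounds, so a constant number of stages gives the claimed bound, consistent with the calculation above. Second, the success probability is inherited unchanged. The only randomness is in the clustering of Lemma \ref{lem:localpart}, which succeeds with high probability independently of $\eps$. Correctness given a good clustering follows deterministically from Lemmas \ref{lem:posweights}, \ref{lem:progressP1b}, \ref{lem:progressP2b} and Theorem \ref{thm:localratio}.

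The only (minor) obstacle is interpreting $\eps = \log^{-O(1)} n$ as a lower bound $\eps \ge \log^{-c} n$, since smaller $\eps$ would make the $\log 1/\eps$ term dominate. Once that reading is fixed, the corollary is a direct substitution into Theorem \ref{thm:MWVC}.
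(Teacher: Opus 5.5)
Your proposal is correct and matches the paper, which treats the corollary as a direct substitution into Theorem~\ref{thm:MWVC}: taking $\eps \ge \log^{-c} n$ gives $\log 1/\eps \le c\log\log n$, so the second term of the running time collapses to $O(\frac{\log n}{\log^2\log n})$. Your reading of $\eps=\log^{-O(1)} n$ as a lower bound on $\eps$, and your observation that the success probability comes only from the clustering step, are both accurate.
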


\section{Graph Decomposition}

In this section we show the graph decomposition result that forms the backbone of our algorithms. Our graph decomposition is a generalization of a well-known randomized procedure of Miller, Peng, and Xu \cite{MPX13} (see Algorithm \ref{alg:part}): nodes each independently generate `shift' variables, and then join the cluster of the node with the lowest `shifted' distance value. This procedure can be show to produce low-diameter clusters with few inter-cluster edges, and it is straightforward to implement in \local\ and \congest\ (with round complexity equal to the largest possible cluster radius). The clustering produced by the procedure is highly related to other forms of network decomposition, such as the $(C,D)$-network decompositions that were introduced by Awerbuch, Luby, Goldberg and Plotkin \cite{ALGP89} and are now one of the most fundamental tools in distibuted graph algorithms (see e.g. \cite{RG20}). Indeed, Algorithm \ref{alg:part} can be adapted to give an $(O(\log n),O(\log n))$-network decomposition in $O(\log n)$ rounds of \local\  with high probability\footnote{Run $O(\log n)$ instances of Algorithm \ref{alg:part} in parallel, using an exponential distribution with constant parameter, and have each node $v$'s color be the index of the instance maximizing the gap between the `winning' $\delta_u - dist(u,v)$ value and the next highest. One can show that with high probability these gaps will all be at at least $2$, in which case the result is a valid $(O(\log n),O(\log n))$-network decomposition.}. However, we are not aware of this, or related notions of clustering/decomposition, being previously used to give \emph{sublogarithmic} distributed algorithms.

\RestyleAlgo{boxruled}

\begin{algorithm}
	\caption{\textsc{Partition$(F)$}\cite{MPX13}}
	\label{alg:part}
	\begin{itemize}
		\item Each node $v$ independently samples $\delta_v \sim F$.
		\item Nodes $v$ join the cluster of the node maximizing $\delta_u - dist(u,v)$.
	\end{itemize}
\end{algorithm}

Miller, Peng, and Xu used as $F$ an exponential distribution, i.e. with cumulative distribution function $F(x) = 1-e^{\lambda x}$ for some parameter $\lambda$. One major reason for this choice is that the exponential distribution has a property called `memorylessness' ($\Prob{X>x+c\vert X>c}= \Prob{X>x}$) that allows a clean analysis of how the clustering behaves. The many subsequent applications of the procedure (including on graph spanners \cite{MPVX15,EN18}, expander decomposition and triangle enumeration \cite{CPSZ21}, minimum spanning tree \cite{A+25}, and radio network communication \cite{HW16,CD21, C+18,D23}) have, to our knowledge, all used the exponential distribution (or highly related distributions, such as the geometric) and have employed the memorylessness property in their analysis. 

Here, we show how to analyze Algorithm \ref{alg:part} under an arbitrary distribution $F$, i.e. without requiring the memorylessness property. We focus on the case where clusters have low (sublogarithmic) diameter, and the number of clusters adjacent to each node is relatively high, but less than polynomial in $n$.

In this work, our shift variable distributions will be of the following form:

\begin{definition}\label{def:gcdf}[Shift variable distributions]
	Let $F: \mathbb R_{\ge 0}\rightarrow [0,1]$ be a cumulative distribution function, with the following properties:
	
	\begin{itemize}
		\item $F$ is non-decreasing and absolutely continuous.
		\item $F(0) = 0$.
		\item $\lim_{x\rightarrow \infty} F(x) = 1$.
	\end{itemize}
	
	We define also $G: \mathbb R_{\ge 0}\rightarrow [0,1]$ as $G(x) = 1-F(x)$; when defining specific functions to use, it is often more convenient to do so by defining $G$. Let the probability density function $f:\mathbb R_{\ge 0}\rightarrow \mathbb R_{\ge 0}$ be the derivative of $F$.
\end{definition}

When applying Algorithm \ref{alg:part} with some shift variable distribution $F$, for some fixed node $v$, let $d_w$ denote $dist(w,v)$. Then, define  \[C_{q}^{\diff}:= |\{w\ne v :\delta_w\ge \max\{\delta_{CC(v)}-d_{CC(v)}-\diff + d_w,q\}   \}|\enspace.\]

That is, $C_{q}^{\diff}$ is the number of nodes $w$ for which $\delta_w - d_w$ is within $\diff$ of the `winning' value of $v$'s cluster center, and $\delta_w$ is at least $q$. While we allow any $\diff\in \mathbb N$ for generality, in our applications we are particularly concerned with $C_{q}^{2}$. Notice that $C_{q}^{2}$ is an upper bound on the number of clusters of radius at least $q$ adjacent to $v$ (other than  $v$'s own cluster). (Here `radius' means the maximum distance from that cluster's center to any of its members.) This is because, since $F$ has domain $\mathbb R_{\ge 0}$, the radius of any cluster $C$ is at most the shift variable generated by $C$'s cluster center (since otherwise the furthest member from the center would have been its own cluster in preference to joining $C$). Also, for a cluster centered at $w$ to be adjacent to $v$ (but not contain it), we must have $\delta_{CC(v)}-d_{CC(v)}-2 \le \delta_{w}-d_{w} < \delta_{CC(v)}-d_{CC(v)}$ (where $CC(v)$ denotes the cluster center of $v$'s cluster).

The remainder of this section will be dedicated to proving bounds on $C_{q,\diff}$ based on $F$. Since we will want a high-probability bound on this quantity, rather than bounding $\Exp{C_{q,\diff}}$, we will instead bound $\Exp{e^{\gamma(q) C_{q,\diff}}}$ for some $\gamma(q)$.

\begin{theorem}\label{thm:moment}
	Consider running Algorithm \ref{alg:part} with some shift variable distribution $F$ satisfying Definition \ref{def:gcdf}. Let $\mult^{\diff}(q):=\sup_{x\ge q} \left(\frac{G(x)}{G(x+2)}-1\right)$, and let $\gamma(q)$ be such that $e^\gamma(q) - 1\le \frac{F( 1)}{2\mult^{\diff}(q)}$. Then, for any node $v$, $\Exp{e^{\gamma(q) C_{q,\diff}}}\le 2e$.
\end{theorem}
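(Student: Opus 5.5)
We fix $v$ and prove the bound by processing the other nodes one at a time, conditioning on the ``winning'' value seen so far. Write $X_w:=\delta_w-d_w$ for each node $w$, so the cluster center of $v$ maximizes $X_w$ over all $w$ (including $v$ itself with $d_v=0$), and let $M:=\max_w X_w$. The quantity $C_{q,\diff}$ counts the nodes $w\ne v$ with $X_w\ge M-\diff$ and $\delta_w\ge q$. Note that $M\ge \delta_v\ge 0$.

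The plan is to reveal the shift variables in an arbitrary fixed order $w_1,w_2,\dots$ (with $v$ first, say). We maintain the running maximum $M_j=\max_{i\le j}X_{w_i}$ and the running count $C_j$ of indices $i\le j$ with $X_{w_i}\ge M_j-\diff$ and $\delta_{w_i}\ge q$. We then aim to bound $\Exp{e^{\gamma C_{\text{final}}}}$ through a potential function that is a supermartingale up to an additive term.

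The key step is to analyse what happens when a new node $w$ with $d_w=d$ is revealed, given the current maximum $m$. There are three cases.
\begin{itemize}
\item If $\delta_w> m+d$, the maximum moves up. The new maximum exceeds the old by some $s>0$, and all counted nodes with $X$ in $[m-\diff,m+s-\diff)$ drop out. In particular, if $s\ge\diff$ the count resets to at most $1$.
\item If $\delta_w\in[\max(m+d-\diff,q+\dots),m+d]$, the count increases by one. Using $G$, this has probability at most $G(\max\{m+d-\diff,q\})-G(m+d)$.
\item Otherwise nothing changes.
\end{itemize}
Comparing the second and first cases, the probability of an increment relative to the probability of a reset is at most
\[\frac{G(x)-G(x+\diff)}{G(x+\diff)}\le \mult^{\diff}(q), \qquad x=\max\{m+d-\diff,q\}.\]
This is exactly where $\mult^{\diff}(q)$ enters; I will take the displayed ratio with shift $\diff=2$ as in the definition. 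The factor $F(1)$ should come from the jump size. With probability at least $F(1)$-proportional mass the overshoot $s$ is at least $1$. The cleanest route may be to argue via the memoryless-free comparison $\Prob{\delta_w\ge x+\diff+1\mid \delta_w\ge x+\diff}$, lower bounding the chance that an upward move is ``large'' and clears the counted window.

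Given this, define the potential $\Phi=e^{\gamma C}$. In each reveal, $\Exp{\Phi'}-\Phi\le \Phi\,[(e^\gamma-1)p_{\text{inc}}-(1-e^{-\gamma\cdot(\text{drop})})p_{\text{reset}}]+(\text{small})$. By the choice $e^{\gamma}-1\le F(1)/(2\mult^{\diff}(q))$, the increment term is at most half the reset term, so the drift is nonpositive whenever the count is large. The leftover comes from resets landing at a count of about $1$, which contribute a bounded constant. An optional-stopping (or simple induction over the reveals) argument then gives $\Exp{\Phi}\le 2e$: the $e$ comes from $e^{\gamma\cdot 1}\le e$ for the post-reset state ($\gamma\le1$ holds since $F(1)\le1$ and $\mult\ge$ something), and the factor $2$ from summing a geometric series of the $1/2$ ratio.

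The main obstacle is the reset step. Without memorylessness, an upward move of the maximum might be tiny and fail to flush the counted window, and the conditional distribution of the overshoot depends on $G$'s shape. I would first try to handle this by coupling the overshoot against $F(1)$ directly. The alternative is to avoid sequential revealing altogether: condition on $M$ and its argmax, and bound $\Exp{e^{\gamma C}\mid M}$ as a product over $w$ of $(1+(e^\gamma-1)\Prob{X_w\in[M-\diff,M)\mid X_w<M})$, with each factor at most $1+(e^\gamma-1)\mult\,\Prob{X_w\ge M-\diff\ldots}$. The product is then controlled by the fact that $\prod_w G$-type terms integrate against the density of $M$, again using $F(1)$ to integrate the density over a unit window. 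I expect this conditioning-on-the-maximum calculation to be where most of the work lies.
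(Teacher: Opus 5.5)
There is a genuine gap. Your main, sequential-reveal route breaks at exactly the reset step you flag.

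A rise of the maximum by $s<\diff$ evicts only the counted nodes with $\delta_w-d_w\in[m-\diff,m-\diff+s)$. How many these are depends on the whole configuration, not on $C$, so no drift inequality for $e^{\gamma C}$ in terms of $(m,C)$ is available. If you credit only full flushes, the ratio you bound by $\mult^{\diff}(q)$ is the wrong one. It compares an increment with \emph{any} rise of the maximum (probability $G(m+d)$), whereas a rise guaranteed to flush the window has probability $G(m+d+\diff)$. Increment versus flush is then a ratio across a window of length $2\diff$, which can be of order $(\mult^{\diff}(q))^2$ (e.g.\ for the polynomial tail of Lemma~\ref{lem:localpart}). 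Your hypothesis $e^{\gamma}-1\le F(1)/(2\mult^{\diff}(q))$ does not control that.

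Nor can $F(1)$ come from the overshoot. Given a rise, the overshoot exceeds $1$ with probability $G(m+d+1)/G(m+d)$. For an exponential distribution this equals $1-F(1)$, which is not bounded below by any multiple of $F(1)$. Finally, for small counts the drift is positive, and bounding the accumulated positive drift over up to $n$ reveals is the whole difficulty. The constants ($e$ from a post-reset state, $2$ from a geometric series) are read off the target rather than derived.

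Your fallback is the paper's route, but it hinges on one specific estimate that your sketch does not identify. Condition on $v$'s cluster center $u$ and on $p=\delta_u-d_u$, so $p\ge\delta_v\ge 0$. The density of the event that $u$ is the center with this value is $f(p+d_u)\prod_{w\ne u}F(p+d_w)\le e\,f(p+d_u)\,e^{-S(p)}$, where $S(p):=\sum_{w\in V}G(p+d_w)$; this uses $1-x\le e^{-x}$ and $G\le 1$. Given this event, the $\delta_w$ with $w\ne u$ are independent, each conditioned on $\delta_w<p+d_w$. For $w\ne v$ we have $F(p+d_w)\ge F(1)$ because $d_w\ge 1$ and $p\ge 0$. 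So with $x=\max\{q,p-\diff+d_w\}$,
\[\Prob{\delta_w\ge x\mid \delta_w<p+d_w}\le\frac{\max\{0,\,G(x)-G(p+d_w)\}}{F(1)}\le\frac{\mult^{\diff}(q)}{F(1)}\,G(p+d_w).\]
This is where $F(1)$ enters. The essential feature is that the bound is proportional to $G(p+d_w)=\Prob{\delta_w-d_w\ge p}$, not to the probability that $\delta_w-d_w\ge p-\diff$ as your sketch suggests.

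With the hypothesis on $\gamma$, this makes the conditional moment generating function of $C_{q,\diff}$ at most $e^{S(p)/2}$, which the factor $e^{-S(p)}$ above beats. Summing over $u$ produces $\sum_u f(p+d_u)=-S'(p)$. The substitution $s=S(p)$ then gives $\Exp{e^{\gamma(q)C_{q,\diff}}}\le e\int_0^{S(0)}e^{-s/2}\,ds\le 2e$. Here the $e$ pays for the missing $u$ term in the product, and the $2$ comes from the $\tfrac12$ in the choice of $\gamma$.
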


\begin{proof}
	
	We begin with the following expression for $\Exp{e^{\gamma(q) C_{q,\diff}}}$, using the law of total probabilities over the possible cluster centers for $v$:
	
	\begin{align*}
	\Exp{e^{\gamma(q) C_{q,\diff}}} &= \sum_{u\in V} \Prob{u \text{ is $v$'s cluster center}} \cdot \Exp{e^{\gamma(q) C_{q,\diff}} \vert u \text{ is $v$'s cluster center}}\\
	&= \int_{0}^{\infty}\sum_{u\in V} \Prob{u \text{ is $v$'s cluster center}\vert \delta_u-d_u = p}\\ 
	&\hspace{1in}\cdot \Exp{e^{\gamma(q) C_{q,\diff}} \vert u \text{ is $v$'s cluster center} \land  \delta_u-d_u = p} f(p+d_u) dp\enspace.
	\end{align*}
	
	We bound the two main terms separately. First:
	
	\begin{align*}
	\Prob{u \text{ is $v$'s cluster center}\vert \delta_u -d_u = p} &= \prod_{w\in V\setminus \{u\}}\Prob{\delta_w-d_w < p}\\
	&\le \frac{1}{F(p+d_u)}\prod_{w\in V}F(p+d_w)\\
	& =   \frac{1}{F(p+d_u)}\prod_{w\in V}\left(1-G(p+d_w)\right)\enspace.\\
	\end{align*}
	
	Using the inequality $1-x\le e^{-x}$ for all $x\ge 0$, and separating out the term incurred by $v$ itself, with $d_v = 0$,
	
	\begin{align*}
	\prod_{w\in V}\left(1-G(p+d_w)\right)&\le \frac{1-G(p)}{e^{-G(p)}} \cdot\prod_{w\in V}e^{-G(p+d_w)}\\
	&=F(p){e^{G(p)}} \cdot e^{-\sum_{w\in V}G(p+d_w)}\\
	&\le e F(p) \cdot e^{-\sum_{w\in V}G(p+d_w)}\enspace.
	\end{align*}
	
	Using this bound,
	\begin{align*}
	\Prob{u \text{ is $v$'s cluster center}\vert \delta_u -d_u = p} &\le \frac{e^2F(p) }{F(p+d_u)}  \cdot e^{-\sum_{w\in V}G(p+d_w)}\\
	&\le e^{1-\sum_{w\in V}G(p+d_w)}\enspace.
	\end{align*}

	Bounding the $\Exp{e^{\gamma(q) C_{q,\diff}} \vert u \text{ is $v$'s cluster center} \land  \delta_u-d_u = p}$ term, we note that $v$ cannot be adjacent to a cluster centered at itself, since if there is such a cluster, $v$ will be in it (and $C_{q,d}$ does not count $v$'s own cluster). That is why, in the following bound, the sum is over $V\setminus\{u,v\}$:
	
	\begin{align*}
	&\Exp{e^{\gamma(q) C_{q,d}} \vert u \text{ is $v$'s cluster center} \land  \delta_u -d_u= p}\\
	&\hspace{1in}= \Exp{e^{\gamma(q) \sum_{w\in V\setminus \{u,v\}}\mathbf 1_{\{ \delta_w\ge \max\{\delta_{u}-d_{u}-\diff + d_w,q\} \vert u \text{ is $v$'s cluster center} \land  \delta_u-d_u = p\}}}}\\
	&\hspace{1in}\le \Exp{e^{\gamma(q) \sum_{w\in V\setminus \{u,v\}}\mathbf 1_{\{\delta_w\ge \max\{p-\diff + d_w,q\}\vert \delta_w \le p  + d_w\}}}}\\ 
	&\hspace{1in}=\prod_{w\in V\setminus \{u,v\}}\Exp{e^{\gamma(q) \mathbf 1_{\{\delta_w\ge \max\{p-\diff + d_w,q\}vert \delta_w \le p  + d_w\}}}},
	\end{align*}
	
	where the last equality holds since the indicator variables for each $w$ are independent. For a Bernoulli variable $X$ that occurs with probability $\mathbf p$, we use the following standard bound (often used in proof of the Chernoff bound):
	
	\[\Exp{e^{\gamma(q) X}} = (1-\mathbf p)e^0 + \mathbf p e^{\gamma(q)} = 1+\mathbf p(e^{\gamma(q)}-1) \le e^{\mathbf p(e^{\gamma(q)} - 1)}\enspace.\]
	
	Plugging this in for $X=\mathbf 1_{\{\delta_w\ge \max\{p-\diff + d_w,q\}\vert \delta_w \le p  + d_w\}}$,
	
	\begin{align*}
	&\Exp{e^{\gamma(q) C_{q,\diff}} \vert u \text{ is $v$'s cluster center} \land  \delta_u -d_u= p}\\
	&\hspace{1in}=\prod_{w\in V\setminus \{u,v\}} e^{\Prob{\delta_w\ge \max\{p-\diff + d_w,q\}\vert \delta_w \le p  + d_w}(e^{\gamma(q)} - 1)}\\
	&\hspace{1in}\le e^{(e^{\gamma(q)} - 1)\sum_{w\in V\setminus \{v\}}\Prob{\delta_w\ge \max\{p-\diff + d_w,q\}\vert \delta_w \le p  + d_w}}\enspace.
	\end{align*}
	
	For any $w\ne v$, we can bound 
	
	\begin{align*}
	\Prob{\delta_w\ge \max\{p-\diff + d_w,q\}\vert \delta_w \le p  + d_w} &\le  
	\frac{F(p  + d_w)-F(\max\{q, p -\diff+d_w\})}{F(p  + d_w)}\\
	&\le \frac{1}{F(1)}\cdot \left(G(\max\{q, p -\diff+d_w\})-G(p  + d_w)\right)\\
	&= \frac{G(p  + d_w)}{F(1)}\cdot \left(\frac{G(\max\{q, p -\diff+d_w\})}{G(p  + d_w)}-1\right)\\
	&\le\frac{G(p  + d_w)}{F(1)}\cdot\sup_{x\ge q} \left(\frac{G(x)}{G(x+2)}-1\right)\\
	&\le \frac{\mult^{\diff}(q) G(p  + d_w)}{F(1)}\enspace.
	\end{align*}
	
	Then,
	
	\begin{align*}
	\Exp{e^{\gamma(q) C_{q,\diff}} \vert u \text{ is $v$'s cluster center} \land  \delta_u -d_u= p}
	&\le e^{(e^{\gamma(q)} - 1)\sum_{w\in V\setminus \{v\}}\Prob{\delta_w\ge \max\{p-\diff + d_w,q\}\vert \delta_w \le p  + d_w}}\\
	&\le e^{\frac{\mult^{\diff}(q) (e^{\gamma(q)} - 1)}{F(1)}\sum_{w\in V\setminus \{v\}}G(p  + d_w)}\\
	&\le e^{\frac 12\sum_{w\in V}G(p  + d_w)}\enspace.
	\end{align*}
	
	Putting the bounds together,
	
	\begin{align*}
	\Exp{e^{\gamma(q) C_{q,\diff}}} &= \int_{0}^{\infty}\sum_{u\in V} \Prob{u \text{ is $v$'s cluster center}\vert \delta_u-d_u = p}\\ 
	&\hspace{1in}\cdot \Exp{e^{\gamma(q) C_{q,\diff}} \vert u \text{ is $v$'s cluster center} \land  \delta_u-d_u = p} f(p+d_u) dp\\
	&\le  \int_{0}^{\infty}\sum_{u\in V} e^{1-\sum_{w\in V}G(p+d_w)} \cdot  e^{\frac 12\sum_{w\in V}G(p  + d_w)}f(p+d_u) dp\\ 
	&=  e\int_{0}^{\infty}e^{-\frac12 \sum_{w\in V}G(p+d_w)}\cdot \sum_{u\in V}  f(p+d_u) dp
	\enspace.
	\end{align*}
	
	We now evaluate the integral by substitution, setting our new function $H(p) := -\sum_{w\in V}G(p +d_w)$. Notice that the derivative $h$ of $H$ is $\sum_{w\in V}f(p +d_w)$. So, we can rephrase the above inequality as:
	
	\begin{align*}
	\Exp{e^{\gamma(q) C_{q,\diff}}} 
	&\le e   \int_{0}^{\infty} e^{\frac12 H(p)}\cdot  h(p) dp\enspace.
	\end{align*}
	
	Integration by substitution then gives:
	
	\begin{align*}
	\Exp{e^{\gamma(q) C_{q,\diff}}} 
	&\le e   \int_{H(0)}^{H(\infty)} e^{\frac z2} dz\\
	&= e \left[2 e^{\frac z2}\right]_{H(0)}^{0}\\
	&\le 2e\enspace.
	\end{align*}
\end{proof}

For our applications, we will use $\textsc{Partition}(F)$ with the following shift variable distribution $F$:

\[F(x) =  1-(1+x)^{-\alpha}; \]

where $\alpha = \frac{\const\log n}{\log\log n}$, for some constant (to be fixed later) $\const \ge 4$. Applying Algorithm \ref{alg:part} with this distribution allows us to prove Lemma \ref{lem:localpart}:

\begin{proof}[Proof of Lemma \ref{lem:localpart}]
	The radius of any cluster is at most the shift variable of its cluster center. The probability that any particular shift variable $\delta_v$ exceeds $\log^{3/\const} n$ is at most $G(\log^{3/\const} n) \le  \log^{\frac{-3\alpha}{\const}} n = n^{-3}$. Therefore, by a union bound over all nodes $v$, with high probability all shift variables are at most $\log^{3/\const} n$ and all clusters are of radius at most $\log^{3/\const} n$. The algorithm therefore only needs to run for $\log^{3/\const} n$ rounds before all nodes know the identity of their cluster center with high probability. This proves the first part of the lemma statement.
	
	For the second part, we fix any node $v$ and bound $C_{q,2}$ which as discussed is an upper bound on the number of clusters of radius at least $q$ adjacent to $v$, excluding $v$'s own cluster. $C_{q,2}+1$ is therefore an upper bound on the quantity including $v$'s cluster. Note that we now need only consider $q\le \log^{3/\const} n$ (since for higher $q$, with high probability $C_{q,2}$ = 0).
	
	We will apply Theorem \ref{thm:moment}, but we must first bound $\mult^2(q)$ and choose a suitable $\gamma(q)$. We have 
	\begin{align*}
	\mult^2(q) &= \sup_{x\ge q}\frac{G(x)}{G(x+2)} -1\\
	&=  \sup_{x\ge q}\left(\frac{1+x}{3+x}\right)^{-\alpha}\\
	&=  \sup_{x\ge q}\left(1+\frac{2}{1+x}\right)^{\alpha}\\
	&\le  \sup_{x\ge q}e^{\frac{2\alpha}{1+x}}\\
	&= e^{\frac{2\alpha}{1+q}}.
	\end{align*}
	
	Since $q\le  \log^{3/\const} n <\frac{\alpha}{9}$, $e^{\frac{2\alpha}{1+q}}\ge e^{17}$.
	
	To apply Theorem \ref{thm:moment}, we require $\gamma(q)$ such that $e^{\gamma(q)} - 1\le \frac{F( 1)}{2\mult^{2}(q)}$. We choose $\gamma(q) = \frac{1}{6}e^{\frac{-2\alpha}{1+q}}$. Then, since $\gamma(q) <1$,
	
	\begin{align*}
	e^{\gamma(q)} -1 \le 2\gamma(q)
	= \frac 13 e^{\frac{-2\alpha}{1+q}}
	\le \frac{1-2^{-\alpha}}{2e^{\frac{2\alpha}{1+q}}-1}
	\le\frac{F( 1)}{2\mult^{2}(q)}\enspace.
	\end{align*}
	
	So, applying Theorem \ref{thm:moment}, $\Exp{e^{\gamma(q) C_{q,2}}} \le 2e$.
	
	By Markov's inequality, 
	\begin{align*}
	\Prob{e^{\gamma(q) C_{q,2}} \ge 5n^3} \le \frac{2e}{5n^3}\le n^{-3}\enspace.
	\end{align*}
	
	So, with probability at least $1-n^{-3}$, we have $e^{\gamma(q) C_{q,2}} <5n^3$, and therefore
	
	\begin{align*}
	C_{q,2} &\le \frac{\ln (5n^3)}{\gamma(q)}\\
	&< 6\ln (5n^3)e^{\frac{2\alpha}{q+1}}\\
	&< e^{\frac{3\alpha}{q+1}} -1&&\text{(since $q\le \log^{3/\const} n$).}
	\end{align*}
	
	By a union bound, this occurs for all nodes $v$ with probability at least $1-n^{-2}$. Then, all nodes have at most $e^{\frac{3\alpha}{q+1}} $ adjacent clusters of radius at least $q$, proving the second part of the lemma statement.
\end{proof}

\section{Conclusions and Open Problems}
In this work, we showed a new graph decomposition result that allowed us to improve the complexities of approximate maximum matching and minimum weighted vertex cover as a function of $n$. This opens up new possibilities for improved complexities on high degree graphs. Our $\log\log n$-factor improvement demonstrates that the $O(\frac{\log\Delta}{\log\log \Delta})$ upper bounds for these problems are not tight for high $\Delta$, but there is still a large gap between upper and lower bounds; one open question is how close one can get to the $\Omega(\sqrt{\frac{\log n}{\log\log n}})$ lower bound of Kuhn, Moscibroda and Wattenhofer \cite{KMW16}.

Another major question is whether our techniques here can be extended to more difficult problems. For example, can one similarly use a decomposition to make $\log\log n$-factor improvements for:

\begin{itemize}
	\item Approximate maximum independent set, $1+\eps$-approximate maximum matching, and $2+\eps$-approximate \emph{weighted} maximum matching, following \cite{BCGS17}?
	\item Maximal matching and maximal independent set? 
\end{itemize}

Any $o(\log n)$-round algorithm for the latter problems would be a very interesting result, since they are often considered the most central problems in the \local\ model, and no improvement on their algorithmic complexities as a function of $n$ has been made since the $O(\log n)$-round algorithms of Israeli and Itai, and Luby, over 40 years ago \cite{II86, L86}. 

\section{Acknowledgements}
	Peter Davies-Peck is supported by EPSRC New Investigator Award UKRI155.

\newcommand{\Proc}{Proceedings of the\xspace}
\newcommand{\STOC}{Annual ACM Symposium on Theory of Computing (STOC)}
\newcommand{\FOCS}{IEEE Symposium on Foundations of Computer Science (FOCS)}
\newcommand{\SODA}{Annual ACM-SIAM Symposium on Discrete Algorithms (SODA)}
\newcommand{\AISTATS}{International Conference on Artificial Intelligence and Statistics}
\newcommand{\COCOON}{Annual International Computing Combinatorics Conference (COCOON)}
\newcommand{\DISC}{International Symposium on Distributed Computing (DISC)}
\newcommand{\ESA}{Annual European Symposium on Algorithms (ESA)}
\newcommand{\ICALP}{Annual International Colloquium on Automata, Languages and Programming (ICALP)}
\newcommand{\ICML}{International Conference on Machine Learning}
\newcommand{\ICLR}{International Conference on Learning Representations}

\newcommand{\IPL}{Information Processing Letters}
\newcommand{\JACM}{Journal of the ACM}
\newcommand{\JALGORITHMS}{Journal of Algorithms}
\newcommand{\JCSS}{Journal of Computer and System Sciences}
\newcommand{\NEURIPS}{Conference on Neural Information Processing Systems}
\newcommand{\PODC}{Annual ACM Symposium on Principles of Distributed Computing (PODC)}
\newcommand{\SICOMP}{SIAM Journal on Computing}
\newcommand{\SPAA}{Annual ACM Symposium on Parallelism in Algorithms and Architectures (SPAA)}
\newcommand{\STACS}{Annual Symposium on Theoretical Aspects of Computer Science (STACS)}
\newcommand{\TALG}{ACM Transactions on Algorithms}
\newcommand{\TCS}{Theoretical Computer Science}
\bibliographystyle{plain}

\bibliography{graphdecomp}
\hide{
	\section{General version - basic}\label{sec:basic}

	\begin{theorem}\label{thm:part1}
		Let $F$ be a cumulative distribution function as in \Cref{def:cdf}. Then, Algorithm \ref{alg:part} produces a clustering with the following properties, with high probability:
		
		\begin{itemize}
			\item For any node $v$, and for $q\in \mathbb N$, let $C_{q,d}$ denote the number of cluster of radius at least $q$ that are within distance $d$ of $v$ (not including the cluster that $v$ itself is in). Then, $\Exp{C_{q,d}}$ is at most $e^2 (\ratiod(q)-1)$.
			\item The maximum radius of any cluster is at most $G^{-1}(n^{-3})$, and therefore maximum strong diameter is at most $2G^{-1}(n^{-3})$.
			\item Algorithm \ref{alg:part} can be run in $O(G^{-1}(n^{-3}))$ rounds of \local, \congest, or \bcongest and in $O(G^{-1}(n^{-3}))\log^2 n$ rounds in radio networks.
		\end{itemize}
	\end{theorem}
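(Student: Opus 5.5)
The three parts of the statement are quite different, and I would handle them separately. The bulk of the work is the expectation bound in the first item. For that I would follow the same route as the proof of Theorem \ref{thm:moment}, but compute a first moment rather than an exponential moment.

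\textbf{Step 1: reduce to a window condition.} Fix $v$ and a node $w$ that is the center of a cluster of radius at least $q$ lying within distance $d$ of $v$. Some member $x$ of that cluster has $dist(x,v)\le d$. Since $x$ prefers $w$ to $CC(v)$, and $v$ prefers $CC(v)$ to $w$, the triangle inequality gives
\[
\delta_{CC(v)}-d_{CC(v)}-2d \le \delta_w-d_w < \delta_{CC(v)}-d_{CC(v)}.
\]
We also have $\delta_w\ge q$, because the radius of a cluster is at most its center's shift variable. So $C_{q,d}$ is at most the number of $w\neq v$ whose shifted value lies in a window of width $2d$ below the winner and whose shift is at least $q$. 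This is exactly the role $\ratiod(q)$ should play: I take it to be $\sup_{x\ge q} G(x)/G(x+2d)$.

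\textbf{Step 2: condition on the winner and bound the conditional expectation.} Condition on $u$ being $v$'s center with $\delta_u-d_u=p$. As in Theorem \ref{thm:moment}, the conditioning probability is at most $e^{1-\sum_{w}G(p+d_w)}$. Given this conditioning, the indicators for $w\ne u,v$ are independent, and each is conditioned on $\delta_w\le p+d_w$. Linearity of expectation then bounds the conditional expectation by
\[
\sum_{w\ne v}\frac{G(\max\{q,p-2d+d_w\})-G(p+d_w)}{F(p+d_w)} .
\]
Each term is at most $\frac{(\ratiod(q)-1)}{F(\cdot)}\,G(p+d_w)$. The $1/F$ factor is the awkward piece. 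I would handle it as in the earlier proof: either bound it by $1/F(1)$ when $d_w\ge 1$, or absorb it into the extra factor of $e$, which comes from splitting off the $w=v$ term since $F(p)e^{G(p)}\le e$.

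\textbf{Step 3: integrate.} Set $H(p)=\sum_w G(p+d_w)$. The total expectation is then at most
\[
e^{2}(\ratiod(q)-1)\int_0^\infty H(p)\,e^{-H(p)}\,(-H'(p))\,dp \;\le\; e^{2}(\ratiod(q)-1)\int_0^\infty z e^{-z}\,dz \;=\; e^{2}(\ratiod(q)-1),
\]
using the same substitution as in Theorem \ref{thm:moment}. I expect Step 2 to be the main obstacle: carefully managing the $1/F(p+d_w)$ normalization and the window width $2d$, so that the constant comes out as exactly $e^2$.

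\textbf{Radius bound.} A cluster's radius is at most its center's shift. Each shift exceeds $G^{-1}(n^{-3})$ with probability $n^{-3}$, so a union bound over all $n$ nodes shows that every cluster has radius at most $G^{-1}(n^{-3})$ with probability at least $1-n^{-2}$. The strong diameter is then at most twice this.

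\textbf{Round complexity.} For the implementation, each node floods the pair (best $\delta_u-dist$, identity of $u$) for $G^{-1}(n^{-3})$ rounds. Only the current maximum needs to be forwarded, so each message is $O(\log n)$ bits. This works in \local, \congest\ and \bcongest. In radio networks, each broadcast round can be simulated by a standard decay procedure in $O(\log^2 n)$ rounds with high probability.
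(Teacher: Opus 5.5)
Your Step 1, Step 3, the radius bound and the LOCAL/CONGEST implementation all match the paper's proof. The gap is where you suspected, in Step 2, and neither of your two options closes it.

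\textbf{First option.} Suppose you start from the bound $e^{1-H(p)}$ on $\Prob{u \text{ is $v$'s center} \mid \delta_u-d_u=p}$, with $H(p)=\sum_{w}G(p+d_w)$, and then use $1/F(p+d_w)\le 1/F(1)$. The argument then gives $\Exp{C_{q,d}}\le \frac{e}{F(1)}(\ratiod(q)-1)$. This is weaker than the claimed $e^2(\ratiod(q)-1)$ whenever $F(1)<1/e$. It is vacuous if $F(1)=0$, and nothing in the assumptions on $F$ excludes that.

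\textbf{Second option.} This fails because the factor $F(p)e^{G(p)}$ from splitting off $v$ has already been spent cancelling $1/F(p+d_u)$ in reaching $e^{1-H(p)}$. Once it is replaced by the constant $e$, no $F$-factor remains to offset $1/F(p+d_w)$. Even if you avoid that normalization, in the case $u=v$ the product $\prod_{w\ne u}F(p+d_w)$ contains no $v$-factor at all.

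\textbf{The missing idea.} Split off a \emph{second} factor from the product: a neighbour $w_1$ of $v$ with $d_{w_1}=1$. One exists unless $v$ is isolated, in which case $C_{q,d}=0$ trivially. Then
\[
\prod_{w\in V}F(p+d_w)\le F(p)e^{G(p)}\cdot F(p+1)e^{G(p+1)}\cdot e^{-H(p)}\le e^{2}F(p)F(p+1)e^{-H(p)} .
\]
Dividing by $F(p+d_u)\ge F(p)$, the conditioning probability is at most $e^2F(p+1)e^{-H(p)}$. On the other side, every $w\ne v$ has $d_w\ge 1$, so $F(p+d_w)\ge F(p+1)$. Hence the conditional expectation is at most $\frac{\ratiod(q)-1}{F(p+1)}H(p)$. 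The two $F(p+1)$ factors cancel, giving the bound $e^2(\ratiod(q)-1)H(p)e^{-H(p)}$ uniformly in $u$. After that your Step 3 goes through unchanged. The constant $e^2$ is exactly one $e$ per split-off factor.

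\textbf{Radio networks.} Decay only guarantees that a node hears \emph{some} neighbour, not all of them. So it cannot simulate a round in which each node takes the maximum over its neighbours' values. That part needs a different mechanism, for example time-triggered cluster growth in which a node joins the first cluster it hears. The paper defers this to the literature.
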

	
	\begin{proof}
		For the first point, we analyze the behavior around a fixed node $v$. For any node $u$, let $d_u$ denote $dist(u,v)$. Notice that, denoting $v$'s cluster center as $u$, the center $w$ of any cluster within distance $d$ of $v$ must satisfy $\delta_w-d_w \ge \delta_u - d_u - 2d$. Otherwise, every node $z$ within distance $d$ of $v$ would have $\delta_w-dist(z,w) < \delta_u - dist(z,u)$ and so would not join $w$'s cluster.
		
		We begin with the following expression for $\Exp{C_{q,d}}$, using the law of total probabilities over the possible cluster centers for $v$:
		
		\begin{align*}
		\Exp{C_{q,d}} &= \sum_{u\in V} \Prob{u \text{ is $v$'s cluster center}} \cdot \Exp{C_{q,d} \vert u \text{ is $v$'s cluster center}}\\
		&= \int_{0}^{\infty}\sum_{u\in V} \Prob{u \text{ is $v$'s cluster center}\vert \delta_u-d_u = p}\\ 
		&\hspace{1in}\cdot \Exp{C_{q,d} \vert u \text{ is $v$'s cluster center} \land  \delta_u-d_u = p} f(p+d_u) dp\enspace.
		\end{align*}
		
		We bound the two main terms separately. First:
		
		\begin{align*}
		\Prob{u \text{ is $v$'s cluster center}\vert \delta_u -d_u = p} &= \prod_{w\in V\setminus \{u\}}\Prob{\delta_w-d_w < p}\\
		&\le \frac{1}{F(p+d_u)}\prod_{w\in V}F(p+d_w)\\
		& =   \frac{1}{F(p+d_u)}\prod_{w\in V}\left(1-G(p+d_w)\right)\enspace.\\
		\end{align*}
		
		$V$ contains $v$ itself, with $d_v = 0$, and we may assume it contains at least one node $w$ with $d_w=1$ (since otherwise $v$ is disconnected and trivially is not within distance $d$ of any cluster other than its own). So, using the inequality $1-x\le e^{-x}$ for all $x\ge 0$,
		
		\begin{align*}
		\prod_{w\in V}\left(1-G(p+d_w)\right)&\le \frac{1-G(p)}{e^{-G(p)}} \cdot \frac{1-G(p+1)}{e^{-G(p+1)}}\cdot\prod_{w\in V}e^{-G(p+d_w)}\\
		&=F(p){e^{G(p)}} \cdot F(p+1){e^{G(p+1)}}\cdot e^{-\sum_{w\in V}G(p+d_w)}\\
		&\le e^2 F(p) \cdot F(p+1)\cdot e^{-\sum_{w\in V}G(p+d_w)}\enspace.
		\end{align*}
		
		Using this bound,
		\begin{align*}
		\Prob{u \text{ is $v$'s cluster center}\vert \delta_u -d_u = p} &\le \frac{e^2F(p) \cdot F(p+1)}{F(p+d_u)}  \cdot e^{-\sum_{w\in V}G(p+d_w)}\\
		&\le e^2  F(p+1) \cdot e^{-\sum_{w\in V}G(p+d_w)}\enspace.
		\end{align*}

		Bounding the $\Exp{C_{q,d} \vert u \text{ is $v$'s cluster center} \land  \delta_u-d_u = p}$ term, we note that $v$ cannot be adjacent to a cluster centered at itself, since if there is such a cluster, $v$ will be in it (and $C_{q,d}$ does not count $v$'s own cluster). That is why, in the following bound, the sum is over $V\setminus\{u,v\}$:
		\begin{align*}
		&\Exp{C_{q,d} \vert u \text{ is $v$'s cluster center} \land  \delta_u -d_u= p}\\
		&\hspace{1in}\le \sum_{w\in V\setminus \{u,v\}}\Prob{\delta_w\ge q \land \delta_w-d_w \ge \delta_u-d_u  -2d\vert u \text{ is $v$'s cluster center} \land  \delta_u-d_u = p}\\
		&\hspace{1in}= \sum_{w\in V\setminus \{v\}}\Prob{\delta_w\ge \max\{q, p -2d+d_w\}\vert \delta_w \le p  + d_w}\\
		&\hspace{1in}= \sum_{w\in V\setminus \{v\}}\frac{F(p  + d_w)-F(\max\{q, p -2d+d_w\})}{F(p  + d_w)}\\
		&\hspace{1in}\le  \sum_{w\in V\setminus \{v\}}\frac{F(p  + d_w)-F(\max\{q, p -2d+d_w\})}{F(p  + 1)}\\
		&\hspace{1in}\le \frac{1}{F(p  + 1)} \sum_{w\in V}F(p  + d_w)-F(\max\{q, p -2d+d_w\})\\
		&\hspace{1in}= \frac{1}{F(p  + 1)} \sum_{w\in V}G(\max\{q, p -2d+d_w\})-G(p  + d_w)\\
		&\hspace{1in}=\frac{1}{F(p  + 1)} \sum_{w\in V} G(p  + d_w)\cdot \left(\frac{G(\max\{q, p -2d+d_w\})}{ G(p  + d_w)}-1\right)\\
		&\hspace{1in}\le  \frac{ \ratiod(q)-1}{F(p  + 1)} \sum_{w\in V} G(p  + d_w)\enspace.
		\end{align*}

		So, combining these two bounds, we now have:
		
		\begin{align*}
		\Exp{C_{q,d}} &= \int_{0}^{\infty}\sum_{u\in V} \Prob{u \text{ is $v$'s cluster center}\vert \delta_u = p} \cdot \Exp{C_q \vert u \text{ is $v$'s cluster center} \land  \delta_u = p} f(p) dp\\
		&\le e^2  (\ratiod(q)-1) \int_{0}^{\infty}\sum_{u\in V} f(p+d_u) \cdot e^{-\sum_{w\in V}G(p +d_w)}\cdot \sum_{w\in V}G(p  + d_w)  dp\enspace.
		\end{align*}
		
		We now evaluate the integral by substitution, setting our new function $H(p) := -\sum_{w\in V}G(p +d_w)$. Notice that the derivative $h$ of $H$ is $\sum_{w\in V}f(p +d_w)$. So, we can rephrase the above inequality as:
		
		\begin{align*}
		\Exp{C_{q,d}}
		&\le -e^2  (\ratiod(q)-1) \int_{0}^{\infty}h(p) \cdot e^{H(p)}\cdot H(p) dp\enspace.
		\end{align*}
		
		Integration by substitution then gives:
		
		\begin{align*}
		\Exp{C_{q,d}}
		&\le -e^2  (\ratiod(q)-1) \int_{H(0)}^{H(\infty)} u e^{u} du\\
		&= -e^2  (\ratiod(q)-1) \left[(u-1) e^{u}\right]_{H(0)}^{0}\\
		&\le -e^2(\ratiod(q)-1)(0-1) e^{0}\\
		&=e^2  (\ratiod(q)-1)\enspace.
		\end{align*}
		
		For the second point, notice that with probability at least $1-n^{-2}$, no node $v$ chooses $\delta_v \ge G^{-1}(n^{-3})$, by a union bound. So, with probability at least $1-n^{-2}$, all clusters have radius at most $G^{-1}(n^{-3})$, and so strong diameter at most $2G^{-1}(n^{-3})$.
		
		For the third point, clearly by the above the algorithm runs in $O(G^{-1}(n^{-3}))$ rounds of \local. To implementent the algorithm in \congest and \bcongest, each node $v$ sends the current highest value $\delta_w-dist(w,v)$ it knows to all neighbors each round, along with the corresponding ID $w$ - after $O(G^{-1}(n^{-3}))$, this value will be that of its final cluster center with high probability. For radio networks, see the radio network implementation of CITATIONS (\cite{MPX13} by \cite{HW16}).
	\end{proof}
}

\end{document}